\let\oldtabular\tabular
\let\endoldtabular\endtabular
\renewenvironment{tabular}{\rowcolors{3}{trevorblue!15}{white}\oldtabular}{\endoldtabular}
\definecolor{lightgrey}{rgb}{0.9,0.9,0.9}
\definecolor{darkgreen}{rgb}{0,0.3,0}
\newtheorem{claim}{Claim}
\renewcommand{\vec}[1]{\mathbf{#1}}
\newcommand{\sequences}{\vec{S}}
\newcommand{\latentData}{\vec{X}}
\newcommand{\latentdata}{\vec{z}}
\newcommand{\phylogeneticParameters}{\boldsymbol{\phi}}
\newcommand{\phylogeny}{{\cal G}}
\newcommand{\tree}{\phylogeny}
\newcommand{\cdensity}[2]{\ensuremath{p(#1 \,|\,#2)}}
\newcommand{\density}[1]{\ensuremath{p(#1 )}}
\newcommand{\treeNode}{\nu}
\newcommand{\nodeIndex}{c}
\newcommand{\rootVarianceScalar}{\tau_0}
\newcommand{\treeVariance}{\vec{V}_{\tree}}
\newcommand{\mdsSD}{\sigma}
\newcommand{\mdsVariance}{\mdsSD^2}
\newcommand{\order}[1]{{\cal O}\hspace{-0.2em}\left( #1 \right)}
\newcommand{\pathLengthNew}[2]{
	d_{F}
	(
	{#1}, {#2}
	)
}
\definecolor{trevorblue}{rgb}{0.330, 0.484, 0.828}
\definecolor{trevoryellow}{rgb}{0.829, 0.680, 0.306}
\title{From viral evolution to spatial contagion: a biologically modulated Hawkes model}
\date{}
\author[1]{Andrew J.~Holbrook}
\author[2]{Xiang Ji}
\author[1,3,4]{Marc A.~Suchard}
\affil[1]{Department of Biostatistics, University of California, Los Angeles}
\affil[2]{Department of Mathematics, Tulane University}
\affil[3]{Department of Biomathematics, University of California, Los Angeles}
\affil[4]{Department of Human Genetics, University of California, Los Angeles}
\begin{document}

\maketitle


\begin{abstract}

Mutations sometimes increase contagiousness for evolving pathogens.  During an epidemic, scientists use viral genome data to infer a shared evolutionary history and connect this history to geographic spread.
We propose a model that directly relates a pathogen's evolution to its spatial contagion dynamics---effectively combining the two epidemiological paradigms of phylogenetic inference and self-exciting process modeling---and apply this \emph{phylogenetic Hawkes process} to a Bayesian analysis of 23,422 viral cases from the 2014-2016 Ebola outbreak in West Africa. The proposed model is able to detect individual viruses with significantly elevated rates of spatiotemporal propagation for a subset of 1,610 samples that provide genome data.  Finally, to facilitate model application in big data settings, we develop massively parallel implementations for the gradient and Hessian of the log-likelihood and apply our high performance computing framework within an adaptively preconditioned Hamiltonian Monte Carlo routine.

\paragraph{Keywords} Bayesian phylogeography; Ebola virus; parallel computing; spatiotemporal Hawkes processes.
\end{abstract}

\section{Introduction}

\newcommand{\citations}{\textcolor{red}{(CITATIONS)} }

 The COVID-19 pandemic has demonstrated the need for new scientific tools for the analysis and prediction of viral contagion across human landscapes.  The mathematical characterization of the complex relationships underlying pathogen genetics and spatial contagion stands as a central challenge of 21st century epidemiology.  We approach this task by unifying two distinct probabilistic approaches to viral modeling.  On the one hand, Bayesian phylogenetics \citep{sinsheimer1996bayesian,yang1997bayesian,mau1999bayesian,suchard01} uses genetic sequences from a limited collection of viral samples to integrate over high-probability shared evolutionary histories in the form of \emph{phylogenies} or family trees.  On the other hand, self-exciting, spatiotemporal Hawkes processes \citep{reinhart2018review} model spatial contagion by allowing an observed event to increase the probability of additional observations nearby and in the immediate future.

Both modeling paradigms come with their own advantages.  For Bayesian phylogenetics, the past twenty years have witnessed a slew of high-impact scientific studies in viral epidemiology \citep{rambaut2008genomic,smith2009origins,faria2014early,gire2014genomic,dudas2017virus,boni2020evolutionary} and the rise of powerful computing tools facilitating inference from expressive, hierarchical models of phylogenies and evolving traits \citep{ronquist2012mrbayes,suchard2018bayesian}.  Unfortunately, the number of evolutionary trees to integrate over explodes with the number of viral samples analyzed \citep{felsenstein1978number}, so Bayesian phylogenetic analyses typically restrict to a relatively small number of viral samples, at most totaling a few thousand.  The fact that viral cases that undergo genetic sequencing usually represent a small subset of the total case count exacerbates this issue.  Thus, failure to detect phylogenetic clades that represent novel strains on account of computational and surveillance limitations always remains a possibility.  Until now, these weaknesses have also held for the sub-discipline of Bayesian phylogeography, which attempts to relate viral evolutionary histories to geographic spread as represented by (typically Brownian) phylogenetic diffusions.  These models describe viral spread through either discretized \citep{lemey2009bayesian} or continuous \citep{lemey2010phylogeography} space, but both approaches induce their own form of bias \citep{holbrook2020massive}.  In the face of these shortcomings, Bayesian phylogeography needs new tools for directly modeling spatial contagion.

\begin{table}[!t]
	\centering
	\begin{tabular}{lll} 
		\toprule
		& Traditional Bayesian phylogenetics &  Hawkes processes  \\
		\midrule
		Observational limit & $N$ in low thousands & $N$ in high tens-of-thousands \\
		Biological insight & Evolutionary history & None \\
		Genetic sequencing & Required & Not required \\
		Spatiotemporal data & Not required & Required \\
		Geographic spread & Not modeled & Modeled\\
		Large-scale transport & Does not induce bias & Induces bias \\
		\bottomrule
	\end{tabular}
	\caption{Comparison of two probabilistic modeling paradigms within viral epidemiology, the combination of which represents a new tool for Bayesian phylogeography.}
\end{table}

Hawkes processes \citep{hawkes1971point,hawkes1971spectra,hawkes1972spectra,hawkes1973cluster,hawkes2018hawkes} are widely applicable point process models for generally viral or contagious phenomena, such as earthquakes and aftershocks \citep{hawkes1973cluster,ogata1988statistical,zhuang2004analyzing,fox2016spatially}, financial stock trading \citep{bacry2015hawkes,hawkes2018hawkes}, viral content on social media \citep{rizoiu2017tutorial,kobayashi2016tideh}, gang violence \citep{mohler2013modeling,mohler2014marked,loeffler2018gun,park2019investigating,holbrook2021scalable} and wildfires \citep{schoenberg2004testing}. Unsurprisingly, Hawkes processes are natural models for the contagion dynamics of biological viruses as well. 
\cite{kim2011spatio} uses spatiotemporal Hawkes processes  \citep{reinhart2018review}, which model viral cases as unmarked events in space and time, to model the spread of avian influenza virus (H5N1). \citet{meyer2014power} incorporate power laws to describe spatial contagion dynamics and model meningococcal disease in Germany from 2001 to 2008. While \citet{rizoiu2018sir} do not model epidemiological data, they do draw connections between epidemiological SIR models and Hawkes processes, showing that the rate of events in the SIR model is equal to that of a finite-population Hawkes model.  \citet{kelly2019real} apply a temporal nonparametric Hawkes process to the 2018-2019 Ebola outbreak in the Democratic Republic of the Congo and successfully generate accurate disease prevalence forecasts. \citet{chiang2020hawkes} model COVID-19 cases and deaths in the U.S. at the county level using spatially indexed mobility and population data to modify the process conditional intensity.  Most recently, \citet{bertozzi2020challenges} compare the performance of a temporal Hawkes process model with temporally evolving conditional intensity to that of SIR and SEIR models for modeling regional COVID-19 case dynamics.

 Because such Hawkes processes do not involve genetic information, one may apply the model to a much larger collection of cases, i.e., those for which a timestamp and spatial coordinates are available.  Moreover, recent successes in scaling Hawkes process inference to a big data setting enable inference from observations numbered in the high tens-of-thousands \citep{holbrook2021scalable,Yuan2021FastEO}.
This ability to interface with an order of magnitude more cases represents a major benefit of the Hawkes process in comparison to the Bayesian phylogenetic paradigm, but the trade-off is that conclusions drawn from a Hawkes process analysis are devoid of explicit biological insight.  It is possible for the model to attribute self-exciting dynamics to nearby viral cases that are only distantly related on the phylogenetic tree. Finally, these processes do not immediately account for viral spread through large-scale transportation networks \citep{brockmann2013hidden,holbrook2020massive} but attribute events resulting from such contagion to a `background' process.

In the following, we construct a Bayesian hierarchical model that allows both modeling approaches to support each other.  This model (Figure \ref{fig:mod}) learns phylogenetic trees that describe the evolutionary history of the subset of observations that yield genetic sequencing and uses this history to inform the distribution of a latent relative rate or \emph{productivity} \citep{schoenberg2019recursive,schoenberg2020nonparametric} for each virus in this limited set. In turn, these virus-specific rates modify the rate of self-excitation of a spatiotemporal Hawkes process describing the contagion of all viruses, sequenced or not.  We use a Metropolis-within-Gibbs strategy to jointly infer all parameters and latent variables of our phylogenetic Hawkes process and overcome the $O(N^2)$ computational complexity of the Hawkes process likelihood by incorporating the modified likelihood in the \textsc{hpHawkes} open-source, high-performance computing library \citep{holbrook2021scalable} available at \url{https://github.com/suchard-group/hawkes}.  Within the same library, we also develop multiple parallel computing algorithms for the log-likelihood gradient and Hessian with respect to the model's virus-specific rates.  GPU-based implementations of these gradient and Hessian calculations score hundred-fold speedups over single-core computing and help overcome quadratic complexity in the context of an adaptively preconditioned Hamiltonian Monte Carlo \citep{neal2011mcmc}.  These speedups prove useful in our analysis of 23,422 viral cases from the 2014-2016 Ebola outbreak in West Africa.

\section{Methods}

\newcommand{\ttheta}{\boldsymbol{\theta}}

\newcommand{\x}{\mathbf{x}}
\newcommand{\dd}{\mbox{d}}
\newcommand{\Id}{\mathbf{I}}
\newcommand{\one}{\boldsymbol{1}}

We develop the phylogenetic Hawkes process and its efficient inference in the following sections.  Importantly, our proposed hierarchical model integrates both sequenced and unsequenced viral case data, representing a significant and clear contribution insofar as:
\begin{enumerate}
	\item the percentage of confirmed viral cases sequenced during an epidemic is often in the single digits \citep{wadman2021united};
	\item and previous phylogeographic models have failed to leverage additional information provided by geolocated, unsequenced case data.
\end{enumerate}
We address this shortcoming by constructing a new hierarchical model that \emph{both} models all spatiotemporal data with a Hawkes process (Section \ref{sec:hawkes}) \emph{and} allows an inferred evolutionary history in the form of a phylogenetic tree to influence dependencies between relative rates of contagion (Section \ref{sec:phyloBrown}) for the small subset of viral cases for which genome data are available.  We believe that this approach is altogether novel.

\subsection{Spatiotemporal Hawkes process for viral contagion}\label{sec:hawkes}
\newcommand{\ttimes}{\mathbf{t}}

Hawkes processes \citep{hawkes1971point,hawkes1971spectra,hawkes1972spectra,hawkes2018hawkes} constitute a useful class of inhomgeneous Poisson point processes \citep{daley2003introduction} for which individual events contribute to an increased rate of future events.  Spatiotemporal Hawkes processes \citep{reinhart2018review} are marked Hawkes processes with spatial coordinates for marks \citep{daley2003introduction}.  We are interested in spatiotemporal Hawkes processes with infinitesimal rate
\begin{align*}
\lambda(\x,t) = \mu(\x) + \xi(\x,t) = \mu(\x) + \sum_{t_n<t} g_n(\x - \x_n, t - t_n)  \, ,
\end{align*}
where $\x \in \mathbb{R}^D$, $t \in \mathbb{R}^+$ and the subscript $n$ indicates that the usual triggering function $g(\cdot,\cdot)$ takes on different forms depending on some characteristic associated with event $n$.  These non-negative, monotonically non-increasing, event-indexed triggering functions additively contribute to $\xi(\cdot,\cdot)$, the \emph{self-excitatory} rate component, and encourage this rate to increase after each observed event.  Here, $\mu(\cdot)$ is the \emph{background} rate and only depends on spatial position $\x$.  Conditioned on observations $(\x_n,t_n)$, $n=1,\dots,N$, we specify the rate components
\begin{align*}
\mu(\x) = \frac{\mu_0}{\tau_x^D} \sum_{n=1}^N \phi\left(\frac{\x-\x_n}{\tau_x}\right)  \mathcal{I}_{[x\neq x_n]} \, \quad \mbox{and} \quad \xi(\x,t) = \frac{\theta_0 \omega}{ h^D} \sum_{t_n<t} \theta_n\, e^{- \omega\, (t-t_n) }  \phi\left(\frac{\x-\x_n}{h}\right) \, ,
\end{align*}
where $\tau_x>0$ and $h>0$  are the background and self-excitatory spatial lengthscales,  $\mu_0>0$ and $\theta_0>0$  are the background and self-excitatory weights, $1/\omega>0$ is the self-excitatory temporal lengthscale, $\phi(\cdot)$ is the $D$-dimensional standard normal probability density function, and the background rate's indicator function prevents a trivial maximum at $\tau_x \rightarrow 0$ \citep{habbema1974stepwise,robert1976choice}.   The inclusion of $\theta_n>0$ for $n=1,\dots,N$ within the self-excitatory rate marks a major departure from similar model specifications in \citet{loeffler2018gun,holbrook2021scalable}.  These `degrees of contagion' or `productivities' \citep{schoenberg2019recursive,schoenberg2020nonparametric} allow different events to contribute differently to the overall self-excitatory rate of the process: the larger the $\theta_n$, the higher the rate directly following event $n$ (Figure \ref{fig:mod}, \emph{right}).  Following the connection of the Hawkes process with exponential triggering function to a discrete time SIR model \citep{rizoiu2018sir}, \citet{bertozzi2020challenges} refer to these quantities as a reproduction number.  In the following, we refer to $\theta_n$ as the \emph{event-specific}, \emph{case-specific} or \emph{virus-specific rate} for the $n$th event, case or viral observation.

 Denoting $\ttheta = (\theta_1,\dots,\theta_N)^T$, the likelihood of observing data $(\latentData,\ttimes)=\left((\x_1,t_1), ..., (\x_N,t_N)\right)^T$ is \citep{daley2003introduction}
\begin{align*}
\mathcal{L}(\latentData,\ttimes|\mu_0,\tau_x,\theta_0,\ttheta,\omega,h)
= \exp \left( - \int_{\mathbb{R}^D} \int_0^{t_N} \lambda(\x,t) \, \dd t\, \dd\x  \right)  \prod_{n=1}^N \lambda(\x_n,t_n) := e ^{ - \Lambda(t_N) } \cdot \prod_{n=1}^N \lambda_n  \, .
\end{align*}
 The choice of $\mathbb{R}^D$ for integration domain is popular and often necessary but assumes complete observation over the entirety of $\mathbb{R}^D$ \citep{schoenberg2013facilitated}.  The resulting integral may be written (Appendix \ref{sec:integral})
\begin{gather*}
\Lambda(t_N) = \mu_0t_N - \theta_0 \sum_{n=1}^N \theta_n \left( e^{-\omega\, (t_N-t_n)} -1 \right) := \sum_{n=1}^N \Lambda_n   \, ,
\end{gather*}
leading to a log-likelihood of
\begin{align}\label{eq:likelihood}
\ell(\latentData,\ttimes|\mu_0,\tau_x,\theta_0,\ttheta,\omega,h) &= - \Lambda(t_N) + \sum_{n=1}^N\log \lambda_n   \\ \nonumber
&= \sum_{n=1}^N\Bigg\{ \log \Bigg[  \sum_{n'=1}^N \Bigg( \frac{\mu_0\, \mathcal{I}_{[\x_n\neq \x_{n'}]}}{\tau_x^D} \phi\left(\frac{\x_n-\x_{n'}}{\tau_x}\right)  \\ \nonumber
& \hspace{4em} +\frac{\theta_0\theta_{n'} \omega\, \mathcal{I}_{[t_{n'}<t_n]}}{ h^D}e^{- \omega\, (t_n-t_{n'})}   \phi\left(\frac{\x_n-\x_{n'}}{h}\right) \Bigg)\Bigg]   - \Lambda_n \Bigg\} \\ \nonumber
&:= \sum_{n=1}^N \left[
\log \left(  \sum_{n'=1}^N \lambda_{nn'} \right)  - \Lambda_n \right] := \sum_{n=1}^N \ell_n  \, .
\end{align}
We reference these formulas while outlining our inference strategy in Section \ref{sec:inference} and detailing our massively parallel algorithms for calculating the log-likelihood gradient and Hessian with respect to event-specific rates $\ttheta$ in Appendix \ref{sec:parallel}. We describe our biologically modulated joint prior on event-specific rates $\theta_1, \dots, \theta_N$ in the next section.

\subsection{Phylogenetic Brownian process prior on rates}\label{sec:phyloBrown}

\begin{figure}
	\centering
	\includegraphics[width=\linewidth]{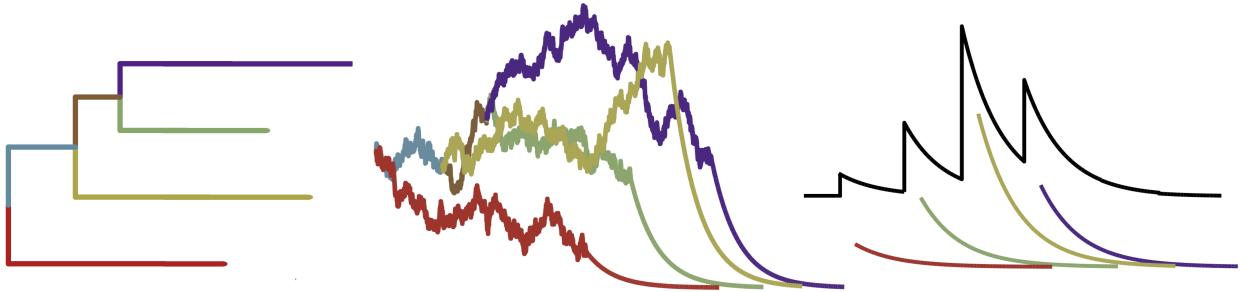}
	\caption{The phylogenetic Hawkes process relates the evolutionary history of a virus to the rate at which subsequent viral cases occur nearby. Left: a phylogenetic tree characterizes the evolution of a viral strain. Middle: a Brownian motion `along the tree' describes the evolution of infinitesimal rates as a function of branch lengths and tree topology (Section \ref{sec:phyloBrown}). Right: virus-specific rates additively contribute to the rate of occurrence for future cases (Section \ref{sec:hawkes}).}\label{fig:mod}
\end{figure}

\newcommand{\whichPhylogeny}[1]{\phylogeny\left( #1 \right)}
\newcommand{\bbeta}{\boldsymbol{\beta}}
\newcommand{\M}{\mathcal{M}}

We use standard Bayesian phylogenetics hierarchical approaches \citep{Suchard03HPM} to model a single molecular sequence alignment $\sequences$ containing sequences from $M\leq N$ evolutionarily related viruses. Let $\M$ denote the ordered index set with cardinality $|\M|=M$ containing every number within the set $\{1,\dots,N\}$ that corresponds to an observed virus for which genome data are present.  In the following, we number the elements within $\M$ as $m_1, m_2, \dots, m_M$. Moreover, we make use of the set $\M^+$ with cardinality $|\M^+|= 2M-1$, satisfying $\M \subset \M^+$ and containing elements $m_1, \dots, m_{2M-1}$.
Our primary object of interest is the phylogenetic tree $\phylogeny$ (Figure \ref{fig:mod}, \emph{left})  defined as a bifurcating, directed graph with $M$ terminal degree-1 nodes $(\treeNode_{m_1}, \ldots, \treeNode_{m_M})$ that correspond to the tips of the tree (or sequenced observations), $M - 2$ internal degree-3 nodes $(\treeNode_{m_{M+1}}, \ldots, \treeNode_{m_{2M-2}})$, a root degree-2 node $\treeNode_{m_{2M-1}}$ and edge weights $(w_{m_1},\ldots, w_{m_{2M-2}})$ that encode the elapsed evolutionary time between nodes. Here, each $w_m$ communicates the expected number of molecular substitutions per site, which is itself the product between the real time duration and the evolutionary rate arising from a molecular clock model. For example, we use a relaxed molecular clock model \citep{drummond2006relaxed} that allows for substitution rates to flexibly vary accross branches (Section \ref{sec:ebola}).  One may either know $\phylogeny$ \emph{a priori} or endow it with a prior distribution parameterized by some vector $\phylogeneticParameters$.
\citet{suchard01} and \cite{suchard2018bayesian} develop the joint distribution $\density{\sequences, \phylogeneticParameters, \phylogeny}$ in detail.

We assume that the event-specific rates $\ttheta$ defined within our Hawkes model take the form (Figure \ref{fig:mod}, \emph{middle})
\begin{align*}
 \begin{cases}
\theta_n = \theta_n(z_n) = \exp\left(z_n+\boldsymbol{\beta}^T \mathbf{f}(t_n)\right) & z_n \in \mathbb{R}\, , \quad n\in \M \\
\theta_n = 1 & \quad n\notin \M  \, ,\\
\end{cases}
\end{align*}
and that the elements of vector $\latentdata = (z_{m_1}, \dots, z_{m_M})^T$ follow a Brownian diffusion process along the branches of $\phylogeny$ \citep{cavalli1967phylogenetic,felsenstein85,lemey2010phylogeography}.
Here, $\mathbf{f}(\cdot)$ is some fixed vector function and the inclusion of the linear term $\boldsymbol{\beta}^T \mathbf{f}(t_n)$ helps control for global trends resulting from extrinsic events such as mass quarantine or travel restrictions.
Under the Brownian process, the latent value of a child node $\treeNode_{\nodeIndex}$ in tree $\phylogeny$ is normally distributed about the value of its parent node $\treeNode_{\text{\tiny pa}(\nodeIndex)}$ with variance $w_{\nodeIndex} \times \sigma^2$, where $\sigma^2$ gives the dispersal rate after controlling for correlation in values that are shared by descent through the phylogenetic tree $\phylogeny$.
We further posit that the latent value of the root node $\treeNode_{m_{2M-1}}$ is \emph{a priori} normally distributed with mean $0$ and variance $\rootVarianceScalar \times \sigma^2$.
The vector $\latentdata$ is then multivariate normally distributed \citet{cybis2015assessing} and has probability density function
\newcommand{\sameTree}[2]{\delta_{#1 #2}}
\begin{align} \small
\cdensity{\latentdata}{ \treeVariance,  \sigma^2, \rootVarianceScalar}
=	\left(2 \pi\sigma^2 \right) ^{-M/2}
\left|
\treeVariance
\right|^{-1/2}
	\mbox{exp}
	\left(
	-\frac{1}{2\sigma^2}
\latentdata^{T}
	\treeVariance^{-1}
 \latentdata
	\right)
\,  ,
\label{eq:multinormal}
\end{align}
where
$\treeVariance = \{ v_{nm} \}$ is a symmetric, positive definite, block-diagonal $M \times M$ matrix with structure dictated by $\phylogeny$. Defining $\pathLengthNew{u}{v}$ to be the sum of edge-weights along the shortest path between nodes $u$ and $v$ in tree $\tree$, the diagonal elements $v_{mm} = \rootVarianceScalar + \pathLengthNew{\treeNode_{m_{2M-1}}}{\nu_{m_m}}$ are the elapsed evolutionary time between the root node and tip node $m_m$, and off-diagonal elements $v_{nm} = \rootVarianceScalar +
\left[
\pathLengthNew{\treeNode_{m_{2M-1}}}{\nu_{m_n}} + \pathLengthNew{\treeNode_{m_{2M-1}}}{\nu_{m_m}}
- \pathLengthNew{\nu_{m_n}}{\nu_{m_m}}
\right] / 2$ are the evolutionary time period between the root node
and the most recent common ancestor of tip nodes $m_n$ and $m_m$.

\subsection{Inference}\label{sec:inference}

\newcommand{\momentum}{\mathbf{p}}
\newcommand{\mass}{\mathbf{M}}

Due to the complexity of the phylogenetic Hawkes process and the large number of viruses we seek to model, we must use advanced statistical, algorithmic and computational tools to infer the posterior distribution
\begin{align}
\label{eq:posterior}
\cdensity{ \mdsVariance, \tree, \phylogeneticParameters,\mu_0,\tau_x,\theta_0,\ttheta,\omega,h,\boldsymbol{\beta} }{\latentData, \ttimes, \sequences}
 \propto &
\mathcal{L}(\latentData,\ttimes|\mu_0,\tau_x,\theta_0,\ttheta(\latentdata),\omega,h) \times
\cdensity{\latentdata}{\mdsVariance, \tree} \times \\  \nonumber
&
\density{\mu_0} \times
\density{\tau_x} \times
\density{\theta_0} \times
\density{\omega} \times
\density{h} \times \density{\boldsymbol{\beta}} \times \\ \nonumber
&\density{\mdsVariance}
\times \density{\sequences, \phylogeneticParameters, \tree}  \, .
\end{align}
We do so using a random-scan Metropolis-with-Gibbs scheme, in which we compute key quantities with the help of adaptively preconditioned Hamiltonian Monte Carlo (HMC) \citep{neal2011mcmc},
dynamic programming and parallel computing on cutting-edge graphics processing units (GPU).

\subsubsection{Dynamic programming for phylogenetic diffusion quantities}

We must evaluate $\cdensity{\latentdata}{\mdsVariance, \tree}$ to sample $\tree$.
The bottleneck within the evaluation of Equation \eqref{eq:multinormal} is the ostensibly $\order{ M^3 }$ matrix inverse $\treeVariance^{-1}$, but \citet{pybus2012unifying} develop a dynamic programming algorithm to perform the requisite computations in $\order{M}$ with parallelized post-order traversals of $\tree$.  We use this algorithm, which is closely related to the linear-time algorithms of \citet{freckleton2012fast} and \citet{ho2014linear}, as all are examples of message passing on a directed, acyclic graph \citep{cavalli1967phylogenetic,pearl1982reverend}.
Similar tricks render inference for $\phylogeneticParameters$ linear in $M$, and \citet{fisher2021relaxed} extend \citet{pybus2012unifying} to compute gradients with respect to $\phylogeneticParameters$.
Finally, implementing these algorithms on GPUs would lead to additional speedups \citep{suchard2009many}, but the computational bottleneck we face when applying the phylogenetic Hawkes process arises from the Hawkes process likelihood and its gradients.

\subsubsection{Massive parallelization for Hawkes model quantities}

Sampling the Hawkes process parameters $\mu_0, \tau_x, \theta_0,\omega, h, \beta$ and event-specific rates $\ttheta$ requires evaluation of the likelihood $\mathcal{L}(\latentData,\ttimes|\mu_0,\tau_x,\theta_0,\ttheta,\omega,h)$ or its logarithm.  Unfortunately, the double summation of Equation \eqref{eq:likelihood} results in an $\order{N^2}$ computational complexity that makes repeated likelihood evaluations all but impossible for the number of observations considered in this paper.  We therefore use the high-performance computing framework of \citet{holbrook2021scalable} to massively parallelize likelihood evaluations in the context of univariate, adaptive Metropolis-Hastings proposals for parameters $\mu_0, \tau_x, \theta_0,\omega$, $h$ and $\beta$.
On the other hand, inference for the $M$-vector $\latentdata$ requires more than fast univariate proposals, so we opt for HMC to sample from its high-dimensional posterior.  Even in high dimensions, HMC efficiently generates proposal states by simulating a physical Hamiltonian system that renders the target posterior distribution invariant.  Here, we follow standard procedure and specify the system with total energy
\begin{align*}
H(\latentdata,\momentum)= - \log \big(\pi(\latentdata)\, \xi(\momentum|\mass) \big) \propto -\log \pi(\latentdata) + \frac{1}{2} \momentum^T\mass^{-1}\momentum \, ,
\end{align*}
where $\pi(\latentdata)$ is the density of the marginal posterior for $\latentdata$, $\momentum$ is a Gaussian distributed `momentum' variable with density  $\xi(\momentum|\mass)$, and $\mass$ is the system mass matrix and the covariance of $\momentum$.  We again follow standard HMC procedure and use the leapfrog algorithm \citep{leimkuhler2004simulating} to integrate Hamilton's equations
\begin{align*}
\dot{\latentdata} = \frac{\partial}{\partial \momentum}H(\latentdata,\momentum) = \frac{1}{2} \mass^{-1} \momentum\, , \quad 
\dot{\momentum} = - \frac{\partial}{\partial \latentdata}H(\latentdata,\momentum) = \nabla \log \pi(\latentdata) \, .
\end{align*}
But there is no free lunch: simulation of the physical system requires repeated evaluations of the log-likelihood gradient, and these evaluations may become burdensome in big data contexts.  Indeed, the gradient of the Hawkes process log-likelihood of Equation \eqref{eq:likelihood} with respect to $\ttheta$ becomes computationally onerous for large $N$ and $M$. The gradient with respect to a single event-specific rate $\theta_m$ takes the form
\begin{align}\label{eq:gradient}
\frac{\partial \ell}{\partial \theta_m} &= - \frac{\partial \Lambda_m}{\partial \theta_m} + \sum_{t_m<t_{n}} \frac{1}{\lambda_{n}}  \frac{\partial \lambda_{nm}}{\partial \theta_m} \\ \nonumber
&= \theta_0  \left(e^{- \omega\, (t_{N}-t_{m}) } - 1 \right) +  \sum_{t_m<t_{n}} \frac{1}{\lambda_{n}}  \frac{\theta_0 \omega}{h^D} e^{- \omega\, (t_{n}-t_m) }  \phi\left(\frac{\x_{n}-\x_m}{h}\right) \, ,
\end{align}
where the summation and $\lambda_{n}$ are both of complexity $\order{N}$.  Thus, computing the entire vector $\partial \ell /\partial \ttheta = (\partial \ell/\partial \theta_1,\dots ,  \partial \ell/\partial \theta_M)^T$ requires time $\order{NM}$.  Worse still, due to the multiscale nature of the posterior for the relative rates (Figure \ref{fig:diagnostic}), we find it necessary to precondition the Hamiltonian dynamics by specifying a diagonal mass matrix with elements
\begin{align}\label{eq:hess}
\mass_{mm}^{-1} \approx - \frac{\partial^2 \ell}{\partial \theta^2_m} = \sum_{t_m<t_{n}} \frac{1}{\lambda_{n}^2}  \frac{\theta^2_0 \omega^2}{h^{2D}} e^{-2 \omega\, (t_{n}-t_m) }  \phi^2\left(\frac{\x_{n}-\x_m}{h}\right) \, .
\end{align}
Specifically, we maintain a running average of Hessians calculated at a fixed interval and use this as our preconditioner $\mass$, thus maintaining asymptotic unbiasedness of Monte Carlo estimates \citep{haario2001adaptive}.
Just as with the gradient, the summation and $\lambda_{n}$ are both of complexity $\order{N}$, and the resulting complexity for the entire Hessian is $\order{NM}$.
To overcome these rate-limiting steps, we develop massively parallel central processing unit (CPU) and GPU implementations of both the gradient and the Hessian.  Although our GPU-based implementations are fastest (Section \ref{sec:perform}), our CPU implementations are competitive, making use of both multi-core processing and SIMD (single instruction, multiple data) vectorization \citep{holbrook2020massive}. Regardless of implementation, all of our high-performance software remains freely available for public use.

%

\subsection{Software availability}

We use the Bayesian evolutionary analysis by sampling trees (\textsc{BEAST}) software package \citep{suchard2018bayesian}, a popular tool for viral phylogenetic inference that implements MCMC methods to explore $\density{\sequences, \phylogeneticParameters, \tree}$ and $\cdensity{\latentdata}{\mdsVariance, \tree}$ \citep{cybis2015assessing} under a range of evolutionary models.
In writing this paper, we have contributed to the open-source, stand-alone library \textsc{hpHawkes} \url{http://github.com/suchard-group/hawkes} for computing the spatiotemporal Hawkes process log-likelihood (Equation \ref{eq:likelihood}), its gradient (Equation \ref{eq:gradient}) and its Hessian (Equation \ref{eq:hess}). \textsc{hpHawkes} integrates into \textsc{BEAST} with the help of an application programming interface (API).
Within \textsc{hpHawkes}, we combine \textsc{C++} code with which standard compilers generate vectorized CPU-specific instructions and \textsc{OpenCL} kernels that allow for GPU-specific optimization.  Finally, we have used the \textsc{Rcpp} package \citep{eddelbuettel2011rcpp} to make the same massive parallelization speedups available to users of the \textsc{R} programming language.

\begin{figure}[t!]
	\centering
	\includegraphics[width=0.9\linewidth]{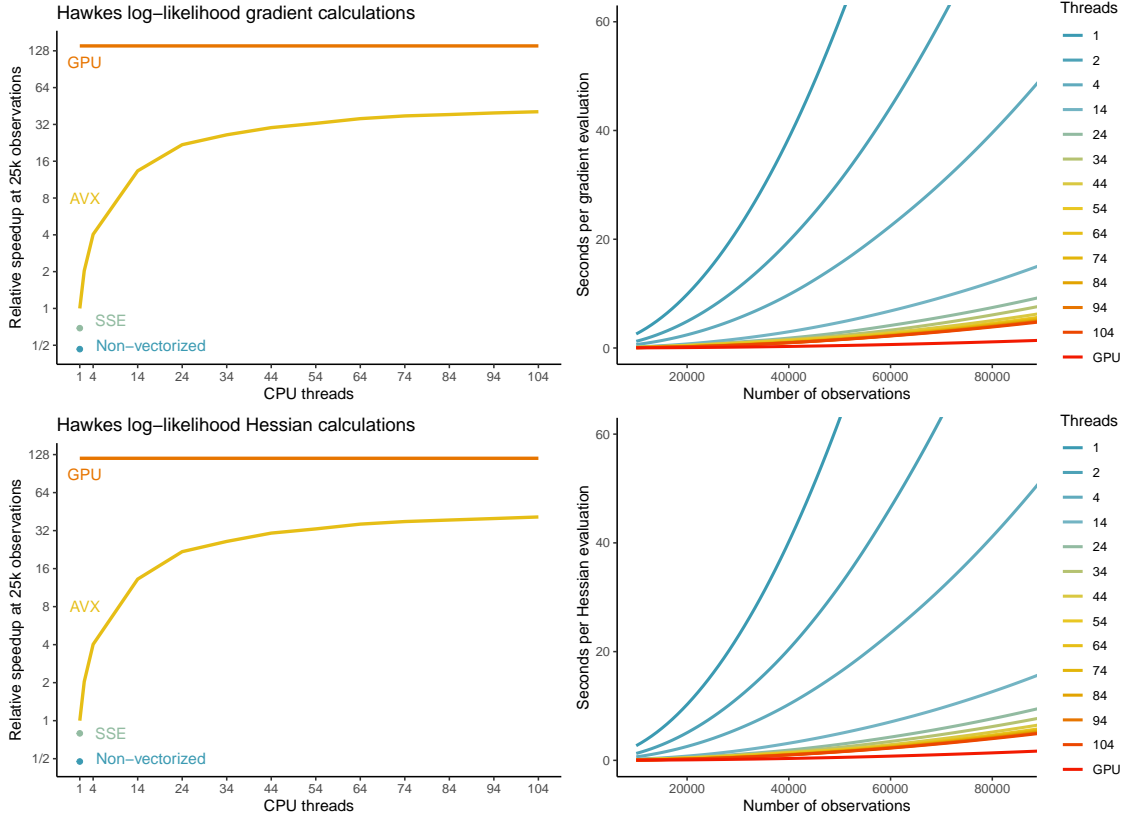}
	\caption{Spatiotemporal Hawkes process log-likelihood gradient and Hessian calculations with respect to event-specific rates $\ttheta$ with central and graphics processing units (CPU and GPU). [Left] Multiplicative speedups over single-threaded advanced vector extensions (AVX) vectorization for single-threaded non-vectorized and streaming SIMD extensions (SSE), multi-threaded AVX and many-core GPU processing for 25,000 randomly generated data points. [Right] Seconds per gradient and Hessian calculations for multi-threaded AVX and GPU implementations from 10 to 90 thousand data points.}\label{fig:speedups}
\end{figure}

\begin{figure}[t]
	\centering
	\includegraphics[width=0.9\linewidth]{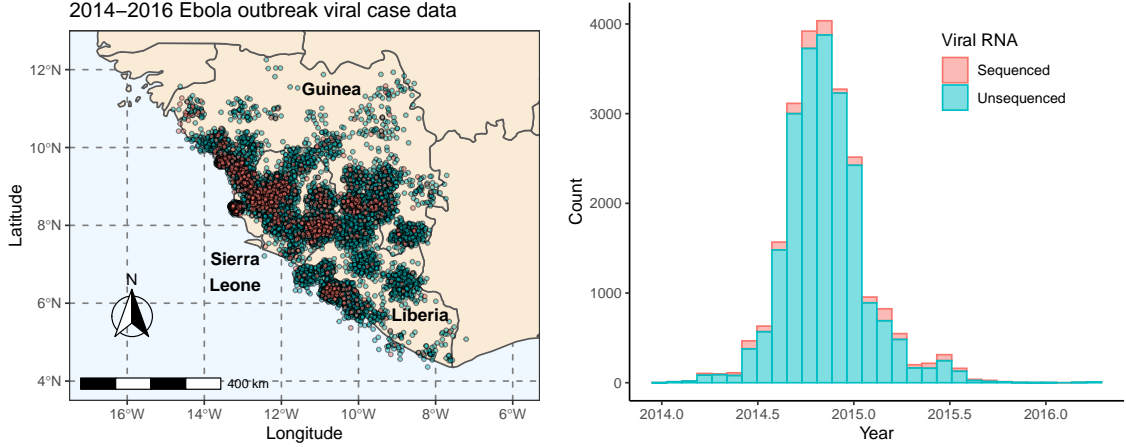}
	\caption{Spatiotemporal distribution of 23,178 viral cases during the 2014-2016 Ebola outbreak in Guinea, Sierra Leone and Liberia. Of this number, 1,367 viral samples yield RNA sequence data and interface directly with the prior over phylogenetic trees. All cases, both sequenced and unsequenced, interface with the Hawkes process likelihood.}\label{fig:data}
\end{figure}

\begin{figure}[t!]
	\centering
	\includegraphics[width=\linewidth]{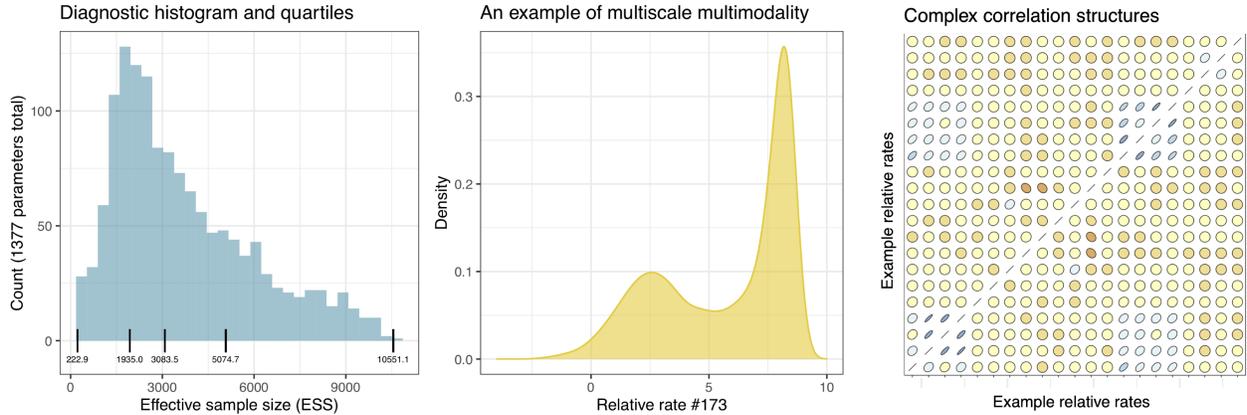}
	\caption{The posterior distribution presents multiple challenges: it is high-dimensional; it takes on different scales for different parameters; it is multimodal in some parameters; and it exhibits complex correlation structures between parameters.  [Left] Histogram and quartiles from 100 million MCMC samples for the ESS of all 1377 model parameters. [Middle] Multimodal marginal posterior for a single relative rate. [Right] Posterior correlations between 21 relative rates.}\label{fig:diagnostic}
\end{figure}

\section{Demonstration}

\subsection{Massive parallelization}\label{sec:perform}

Figure \ref{fig:speedups} shows benchmarking results for evaluating the Hawkes log-likelihood gradient with respect to event-specific rates $\ttheta$ (Equation \ref{eq:gradient}).
 For the GPU results, we use an NVIDIA Quadro GV100, which has 5,120 CUDA cores (at 1.13 GHz) and reaches an (unboosted) 2.9 teraflops peak double-precision floating point performance (or 5.8 teraflops for fused operations such as fused multiply-add).
We use a Linux machine with two 26-core Intel Xeon Gold processors (2.1 GHz) for CPU results.  Each physical core supports 2 threads or logical cores, and the machine achieves a peak performance of 874 gigaflops with double-precision floating point enhanced with AVX vectorization (again, double this for fused operations).
Based on peak double-precision floating point operations, our \emph{a priori} expectation is for fully parallelized GPU-based gradient evaluations to to be roughly 3.3 times faster than 104-threaded AVX evaluations on the CPU.

On the left of Figure \ref{fig:speedups}, we compare relative efficiency for GPU and various CPU implementations of the log-likelihood gradient and Hessian for 25,000 simulated data points using single-threaded AVX computing (15.7 and 16.3 seconds per gradient and Hessian evaluations) as baseline.  Using SSE or non-vectorized single-threaded computing results in 1.5- and 2.2-fold slowdowns for the gradient and 1.3- and 2.1-fold slowdowns for the Hessian. Sticking with AVX processing, we see diminishing returns as we increase the number of threads. For the both the gradient and the Hessian, the 14- 54- and 104-thread AVX implementations are roughly 13, 33 and 41 times faster than single-threaded AVX.  Agreeing with our \emph{a priori} expectations, the GPU implementation is 140.4 times faster than single-threaded AVX and 3.5 times faster than 104-threaded AVX for the gradient and 120.0 times faster than single-threaded AVX and 2.9 times faster than 104-threaded AVX for the Hessian.  The right of Figure \ref{fig:speedups} demonstrates the $\order{N^2}$ computational complexity for the same gradient and Hessian evaluations by varying the number of data points from 10,000 to 90,000.  While parallelization does not overcome this quadratic scaling, it does reduce computational costs for finite observation counts.

\subsection{2014-2016 Ebola outbreak in West Africa}\label{sec:ebola}

\begin{table}[t!]
	\centering
	\resizebox{\textwidth}{!}{\begin{tabular}{lllll} 
			\toprule
			Hierarchical model	&&& Posterior mean & \\
			module & Model parameter & Symbol &(95\% HPD cred. int.) & Unit \\
			\midrule
			Hawkes process  & Background spatial lengthscale &$\tau_x$ &  194 (147, 243) & km\\
			& Self-excitatory temporal lengthscale& $1/\omega$ & 29.8 (28.5, 30.9) & days \\
			& Self-excitatory spatial lengthscale& $h$  &  7.37 (7.13, 7.62)  & km \\
			& Normalized self-excitatory weight& $\theta_0/(\theta_0+\mu_0)$ &  0.96 (0.95, 0.97) & ---\\
			& Temporal trend coefficient &$\beta$            & -2.22 (-2.37,-2.06)  & --- \\
			Phylogenetic diffusion & Standard deviation& $\sigma$ & 51.0 (46.4, 55.7) & log rate  \\
			\bottomrule
	\end{tabular}}
	\caption{Posterior means and 95\% highest posterior density (HPD) credible intervals from the application of the phylogenetic Hawkes process to the 2014-2016 Ebola outbreak in Guinea, Sierra Leone and Liberia.}\label{tab:postRes}
\end{table}

During the 2014-2016 outbreak in Guinea, Sierra Leone and Liberia, Ebola viral fever resulted in over 28,000 known cases and 11,000 known deaths \citep{world2015ebola}.  First reports of the virus in Guinea emerged during March of 2014 \citep{baize2014emergence}.  At around the same time, viral cases with the same Guinean origin \citep{gire2014genomic} emerged in Sierra Leone and Liberia.   In May 2014, the virus crossed from Guinea to Kailahun, Sierra Leone.  From there, it spread to multiple counties of Liberia and Guinea \citep{dudas2017virus}, and the same strain reached Freetown, the capital of Sierra Leone, by July 2014.  In the fall of 2014, Sierra Leone and Liberia were detecting 500 and 700 new cases a week.  Only by the end of 2014 did case numbers begin to abate in most areas due to control measures.  By March of 2015 sustained transmission of the virus only continued in western Guinea and western Sierra Leone \citep{dudas2017virus}.  Figure \ref{fig:data} shows the spatiotemporal distribution of the majority of known Ebola virus cases during the epidemic.

Using our high performance computing framework, we apply the phylogenetic Hawkes process to the analysis of 23,422 viral cases.  \citet{dudas2017virus} provide a total of 1,610 cases furnishing genomic sequencing, 1,367 of which come with date and location data (\url{https://github.com/ebov/space-time}).  We supplement this sequenced data with 21,811 date and location pairs from unsequenced cases documented by the World Health Organization (\url{https://apps.who.int/gho/data/node.ebola-sitrep}).
The precision of the spatial data is district or county level.  To leverage spatial information as much as practically possible within our Hawkes model, we assume the locations follow a Gaussian distribution at district population centroids and with variance guaranteeing a 95\% probability of the case occurring within the circle of equal area to the district and centered at the population centroid.  We then integrate over uncertainty with respect to these locations by periodically sampling new locations according to the assumed Gaussian distribution throughout the MCMC run and with a period of roughly 100 iterations. That said, sensitivity analyses show that model inference is robust to fixing randomly generated locations for the entire MCMC chain.
We make the combined data and documentation for our entire \textsc{BEAST} analysis available within the single file \texttt{Final.xml} and place this as well as other project scripts together at the repository \url{https://github.com/suchard-group/EBOVPhyloHawkes}.  In addition to the software mentioned in the previous section, we make use of the \textsc{ggplot2} and \textsc{ggmap} \textsc{R} packages for data and results visualization \citep{ggplot,kahle2013ggmap}.

\begin{figure}[t]
	\centering
	\includegraphics[width=0.9\linewidth]{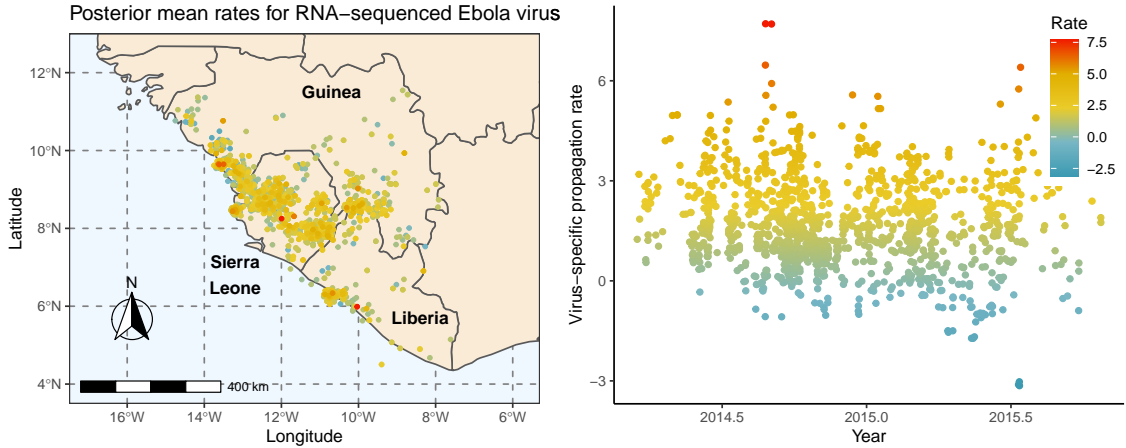}
	\caption{Hawkes model posterior mean rates $\ttheta$ for the 1,367 (of 1,610) RNA-sequenced viral samples for which date/location data are available.  Unsurprisingly, the largest relative rates occur within or nearby major clusters of events. Adjusting for downward trends in case data with a negative coefficient $\beta$ (Table \ref{tab:postRes}) allows detection of higher relative rates after peak outbreak (late 2014) including a jump in infections mid-2015 (Figure \ref{fig:data}).}\label{fig:ratesMap}
\end{figure}

For the phylogenetic prior specification $\density{\sequences, \phylogeneticParameters, \tree}$, we follow the phylogeographic analysis of \citet{dudas2017virus} and use a mixture of 1,000 phylogenetic trees obtained as high-probability posterior samples from their purely phylogenetic analysis of the 1,610 sequenced viral samples.  In that preceding Bayesian analysis, \citet{dudas2017virus} combine an HKY$+\Gamma_4$ substitution model prior for molecular evolution \citep{hasegawa1985dating,yang1994maximum}, a relaxed molecular clock prior on rates \citep{drummond06}, a non-parametric coalescent `Skygrid' prior on effective population size dynamics \citep{gill2013improving} and a continuous time Markov chain reference prior for overall rate \citep{ferreira2008bayesian}.
We assume \emph{a priori} that the background lengthscale $\tau_x$ follows a diffuse inverse gamma distribution with shape 1 and scale 10, where distance units are latitudinal and longitudinal degrees.  An inverse gamma distribution with shape and scale parameters equal to 2 and 0.5 for both $h$ and $1/\omega$ encodes our beliefs that self-excitatory dynamics occur at finer spatiotemporal scales, where years are the temporal units.  We upweight self-excitatory dynamics by giving $\theta_0$ and $\mu_0$ gamma priors with shape parameters $1$ and $2$ and scale parameters $0.001$ and $2$, respectively.  We absorb $\tau_0$ into $\sigma$ and place a tight inverse gamma prior on $1/\sigma$ with shape and scale parameters of 2 and 0.5.  Finally, we set $\mathbf{f}(t_n)=t_n$ and place a normal prior on the univariate coefficient $\beta$ with mean 0 and standard deviation of 10.  We find all parameters robust to prior specification due to the large number of observations considered.

\begin{figure}[t]
	\centering
	\includegraphics[width=\linewidth]{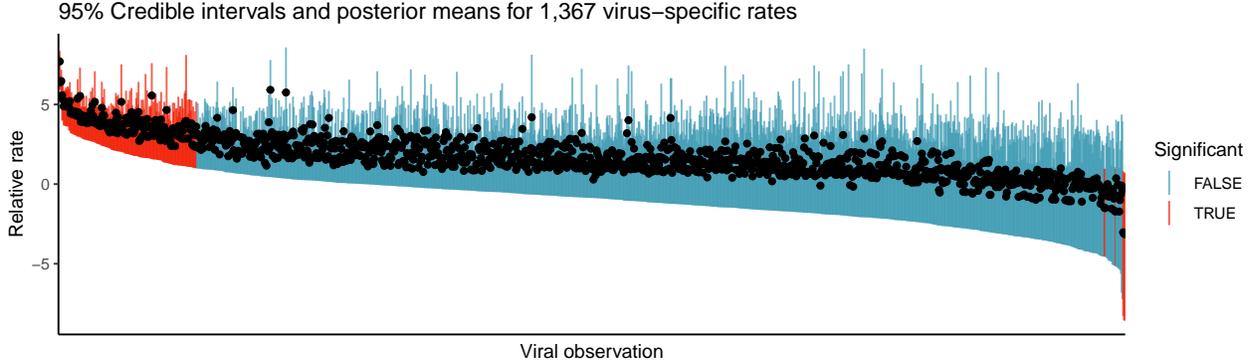}
	\caption{95\% credible intervals and posterior means for virus-specific rates $\ttheta$ corresponding to the subset of 1,610 sequenced viruses that come with date/location data and therefore appear in the Hawkes process module.  We call those 183 intervals which do not include 1 `significant'.  Of these, 177 are above and 6 below 1.  Appendix \ref{sec:signifTaxa} has labels, locations and times for each.}\label{fig:cis}
\end{figure}

\begin{figure}[t]
	\centering
	\includegraphics[width=0.7\linewidth]{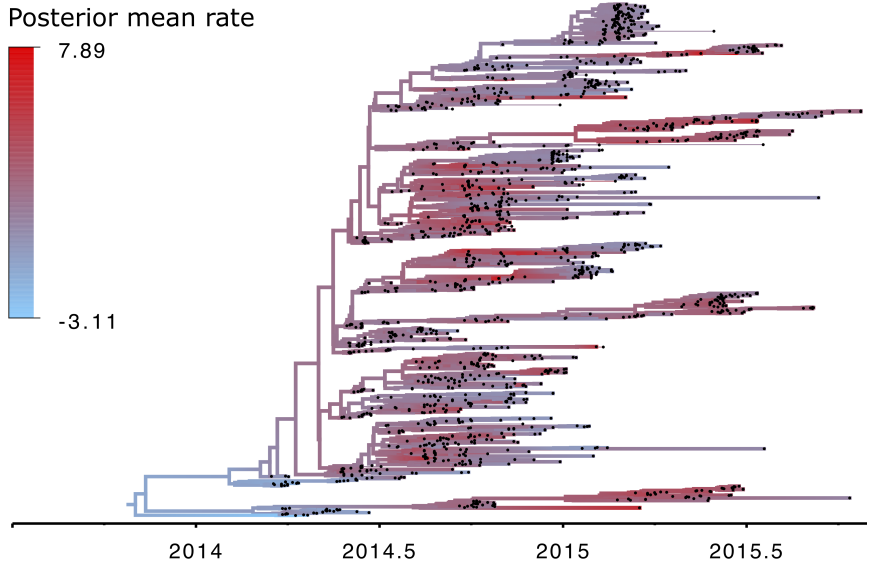}
	\caption{Posterior mean virus-specific relative rates color the posterior maximum clade credibility tree of phylogeny $\tree$, a few subtrees of which potentially demonstrate elevated contagiousness.}\label{fig:tree}
\end{figure}

We generate 100 million MCMC samples according to the routine outlined in Section \ref{sec:inference} and discard the first 500 thousand as burn-in. Using our parallel computing algorithms and a single NVIDIA GV100 GPU (Section \ref{sec:perform}), the routine requires 6.77 hours to generate 1 million samples and 28 days to generate all 100 million samples.  Figure \ref{fig:diagnostic} shows the distribution of effective samples sizes (ESS) across all model parameters and illustrates some of the challenges facing any MCMC routine for the phylognetic Hawkes model.  Namely, the posterior distribution is high-dimensional, multimodal, multiscale and has complex correlation structures.

Table \ref{tab:postRes} shows posterior means and 95\% highest posterior density (HPD) credible intervals for the phylogenetic Hawkes process parameters.  The posterior mean for the spatial bandwidth $\tau_x$ of the Hawkes background process is 194 km (147, 243), allowing the model to incorporate and adapt to large scale geographic movement.  On the other hand, the Hawkes process self-excitatory spatial bandwidth $h$ has a posterior mean of 7.4 km (7.1, 7.6), indicating the smaller local scale for which the model attributes viral contagion.  The self-excitatory temporal bandwidth $1/\omega$ has a posterior mean of 29.8 days (28.5, 30.9), indicating the timescale for which the model attributes the same viral contagion.  The normalized self-excitatory weight $\theta_0/(\theta_0+\mu_0)$ indicates the proportion of events the model attributes to self-excitatory (compared to background) dynamics and has a posterior mean of 0.96 (0.95, 0.97).  The posterior mean of the self-excitatory rate's temporal trend coefficient $\beta$ is -2.22 (-2.37, -2.06) indicates that, for every additional year and \emph{ceteris paribus}, one should expect a multiplicative decrease of $1-\exp(-2.22)\times 100 \approx 90\%$ to the process self-excitatory rate.  In this way, the model adjusts for downward trends arising from epidemiological control (e.g., mass quarantine and travel restrictions) and controls for these factors when inferring virus-specific relative rates.

Next, we consider posterior inference of the virus-specific rates $\ttheta$ for those viral observations that provide RNA sequences.  When interpreting these results, it is important to understand that the phylogenetic Hawkes process implicitly assumes that such samples spark nearby contagion as described by spatial and temporal bandwidths $h$ and $1/\omega$.  Recall that the posteriors for these two parameters concentrate at over 7 km and 4 weeks, respectively.  Figure \ref{fig:ratesMap} depicts the relationship between posterior mean values of $\ttheta$ and the spatiotemporal distribution of corresponding viruses.
Since these rates represent multiplicative factors of the global self-excitatory weight $\theta_0$, a null value would be 1.  Posterior means range from approximately -2.5 to 7.5 and increasingly vary as a function of time.  As one might expect, the highest rates appear near or within larger clusters.  Thanks to the negative temporal trend coefficient $\beta$ and the increase of uncertainty with time, larger rate values do obtain for some viral cases occurring in 2015, despite following after peak epidemic.  Figure \ref{fig:cis} features posterior means and 95\% intervals for the same virus-specific rates.  Only a small subset of 183 rate intervals do not include 1. Of these, 177 have lower bound greater than 1, and 6 have upper bound below 1.  We interpret all 103 of the corresponding viruses as having statistically significantly increased or decreased contagiousness.  Virus-specific labels, locations and times for all 183 appear in Appendix \ref{sec:signifTaxa}.

Finally, Figure \ref{fig:tree} shows how these posterior rates organize as a function of the inferred posterior maximum clade credibility tree $\tree$.  Generally speaking, shorter branch lengths indicate larger effective populations of viruses, while larger branch lengths indicate smaller.  For example, the structure of the bottom subtree reflects this intuition as branches are short with many splits during peak outbreak in the second half of 2014 but become mostly long in late 2014 and the remainder of 2015.  According to the phylogenetic Brownian process model outlined in Section \ref{sec:phyloBrown}, virus-specific rate values are more highly correlated to one another when closely located to one-another on the the phylogenetic tree.  It is plausible that these correlations allow the phylogenetic Hawkes model to infer higher rates for some strains that survive late into the epidemic despite dropping case counts.  The model attributes some of the highest values to strains appearing in Coyah, Conakry and Kindia, Guinea, in late 2014 and early 2015.  Interestingly, the model also attributes its lowest values to cases in Kono, Sierra Leone, in early 2015.  Taken together, Figure \ref{fig:tree} and the significant virus lists of Appendix \ref{sec:signifTaxa} may provide helpful leads for epidemiologists searching for variants with heightened relative rates of contagion.

\section{Discussion}

We propose the phylogenetic Hawkes process, a Bayesian hierarchical model that relates viral spatial contagion to molecular evolution by uniting the two epidemiological paradigms of self-exciting point process and phylogenetic modeling.  Due to difficulties in scaling the model to larger numbers of observations, we advance a computing strategy that combines HMC, dynamic programming and massive parallelization for key inferential bottlenecks.  Finally, we apply our novel model and high performance computing framework to the analysis of over 23,000 viral cases arising from the 2014-2016 Ebola outbreak in West Africa, and Ebola strains and subtrees with plausibly higher degrees of contagiousness reveal themselves.

Unfortunately, the current model will fail when applied to the analysis of a global pandemic due to the dominant role of non-local, large scale transportation networks in propagating viral spread \citep{holbrook2020massive}.   We are particularly interested in developing extensions to the phylogenetic Hawkes process that leverage recent advances in scaling high-dimensional multivariate Hawkes processes \citep{nickel2020learning} and applying the resulting multivariate phylogenetic Hawkes process to the analysis of global pandemics.  In this context, each additional dimension will represent an additional country or province.   Prodigious computational challenges are inevitable, and we suspect that non-trivial GPU implementations will be necessary for big data applications.

Moving beyond inference, a major question is whether the phylogenetic Hawkes process can be useful for prediction of spatial contagion and dynamics.  Here, recent neural network extensions of the Hawkes process might prove useful \citep{mei2017neural,zuo2020transformer,zhang2020self}, but it is unclear what forward simulation of phylogenetic branching dynamics would look like in the context of a Hawkes process.  Moreover, generating samples from the posterior predictive distribution of a Hawkes process would be extremely time consuming when one is conditioning on millions of posterior samples.  To work around this, one could perhaps parallelize over fixed parameter settings the algorithm of \citet{dassios2011dynamic} for simulating Hawkes processes when the temporal triggering function is exponential.  Such an implementation would require efficient use of parallel pseudo-random number generators \citep{salmon2011parallel}.

\section*{Acknowledgments}

The research leading to these results has received funding through National Institutes of Health grants K25 AI153816, U19 AI135995 and R01 AI153044 and National Science Foundation grant DMS1264153.
We gratefully acknowledge support from NVIDIA Corporation and Advanced Micro Devices, Inc.~with the donation of parallel computing resources used for this research.

\appendix

\section{The likelihood integral with event-specific rates}\label{sec:integral}

Without loss of generality, we consider the temporal Hawkes process with constant background rate.
To compute the likelihood (Equation \ref{eq:likelihood}), we must calculate the integral
\begin{align*}
\Lambda(t_{N}) &= \int_0^{t_N} \lambda(t)\, \mbox{\emph{d}} t=\int_0^{t_{N}} \left(\mu +  \theta_0 \sum_{t_n<t} \theta_n \omega e^{- \omega\, (t-t_n) }\right) \dd t \\ &= \int_0^{t_{1}} \mu\, \dd t + \sum_{n=1}^{N-1} \int_{t_n}^{t_{n+1}} \left(\mu +  \theta_0 \sum_{t_n<t} \theta_n \omega e^{- \omega\, (t-t_n) }\right) \dd t \\
&= \mu \,t_{N} + \theta_0\omega \sum_{n=1}^{N-1} \int_{t_n}^{t_{n+1}}  \sum_{n'=1}^n \theta_{n'} e^{- \omega\, (t-t_{n'}) }\, \dd t \\
&=  \mu \,t_{N} + \theta_0\omega \sum_{n=1}^{N-1} \sum_{n'=1}^n \theta_{n'} \int_{t_n}^{t_{n+1}}   e^{- \omega\, (t-t_{n'}) }\, \dd t \\
&= \mu \,t_{N} -  \theta_0 \sum_{n=1}^{N-1} \sum_{n'=1}^n \theta_{n'}  \left(  e^{- \omega\, (t_{n+1}-t_{n'}) }- e^{- \omega\, (t_{n}-t_{n'}) } \right) \, ,
\end{align*}
and we further simplify the double summation in the following.

\begin{claim}
	The temporal Hawkes process with rate function
	\begin{align}
	\lambda(t) = \mu +  \theta_0 \sum_{t_n<t} \theta_n \omega e^{- \omega\, (t-t_n) }
	\end{align}
	admits the integral
	\begin{align}
	\Lambda(t_N) = \int_0^{t_N} \lambda(t)\, \mbox{\emph{d}} t = \mu\, t_N - \theta_0 \sum_{n=1}^{N-1} \theta_n \left(e^{-\omega(t_N-t_n)} -1 \right) \, .
	\end{align}
\end{claim}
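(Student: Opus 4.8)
The plan is to resume from the final display in the excerpt, where the integral has already been brought to the form
\[
\Lambda(t_N) = \mu\, t_N - \theta_0 \sum_{n=1}^{N-1} \sum_{n'=1}^{n} \theta_{n'} \left( e^{-\omega\,(t_{n+1}-t_{n'})} - e^{-\omega\,(t_{n}-t_{n'})} \right) \, ,
\]
and to finish by simplifying the inner double summation via an exchange of summation order followed by a telescoping argument. Nothing beyond elementary manipulation of finite sums is needed.

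First I would rewrite the sum over the triangular index set $\{(n,n') : 1 \le n' \le n \le N-1\}$ as $\sum_{n=1}^{N-1}\sum_{n'=1}^{n} = \sum_{n'=1}^{N-1}\sum_{n=n'}^{N-1}$, which is valid because the index set is finite. For each fixed $n'$, the inner sum $\sum_{n=n'}^{N-1}\bigl( e^{-\omega\,(t_{n+1}-t_{n'})} - e^{-\omega\,(t_{n}-t_{n'})} \bigr)$ is telescoping in $n$: successive terms cancel, leaving the last term minus the first, namely $e^{-\omega\,(t_{N}-t_{n'})} - e^{-\omega\,(t_{n'}-t_{n'})} = e^{-\omega\,(t_N - t_{n'})} - 1$. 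Substituting this back and relabelling the dummy index $n' \mapsto n$ yields exactly $\Lambda(t_N) = \mu\, t_N - \theta_0 \sum_{n=1}^{N-1} \theta_n\bigl(e^{-\omega(t_N-t_n)} - 1\bigr)$, which is the claimed identity. (As a sanity check one may note $e^{-\omega\cdot 0}=1$, so the $n=N$ term vanishes and the sum may be written over $n=1,\dots,N$; this reconciles the claim with the decomposition $\Lambda(t_N)=\sum_n \Lambda_n$ used in the main text.)

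I do not anticipate any genuine obstacle. The only point demanding care is the bookkeeping of summation limits when interchanging the two sums --- in particular that after the swap the inner index runs from $n=n'$ rather than from $n=1$, so that the telescoping endpoints are $t_{n'}$ and $t_N$ (not $t_1$ and $t_{N-1}$); getting this boundary right is what produces the clean factor $e^{-\omega(t_N-t_n)}-1$.
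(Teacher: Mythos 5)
Your proof is correct, but it takes a genuinely different route from the paper's. The paper proves the claim by induction on $N$: the base case $N=1$ is immediate, and the inductive step splits $\Lambda(t_{N+1})$ into the increment over $[t_N,t_{N+1}]$ plus $\Lambda(t_N)$, invokes the inductive hypothesis, and recombines the two remaining sums. You instead work directly on the double sum, interchanging the order of summation over the triangular index set and observing that for each fixed $n'$ the inner sum $\sum_{n=n'}^{N-1}\bigl(e^{-\omega(t_{n+1}-t_{n'})}-e^{-\omega(t_{n}-t_{n'})}\bigr)$ telescopes to $e^{-\omega(t_N-t_{n'})}-1$. Your bookkeeping is right: after the swap the inner index starts at $n=n'$, so the surviving boundary terms are exactly $e^{-\omega(t_N-t_{n'})}$ and $e^{-\omega\cdot 0}=1$, and your remark that the $n=N$ term vanishes correctly reconciles the $N-1$ upper limit in the claim with the sum to $N$ used in the main text. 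The two arguments are essentially the same computation organized differently --- the paper's induction peels off one layer of the telescope per step, whereas you collapse it in one pass --- but your version is shorter, avoids the somewhat heavy algebra of the inductive step, and makes the cancellation mechanism explicit rather than hiding it inside the induction hypothesis.
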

\begin{proof}
	Proceeding by induction, the assertion is trivial for $N=1$.  If it is true for some $N>0$, this implies that
	\begin{align*}
	\sum_{n=1}^{N-1} \sum_{n'=1}^n \theta_{n'}  \left(  e^{- \omega\, (t_{n+1}-t_{n'}) }- e^{- \omega\, (t_{n}-t_{n'}) } \right) = \sum_{n=1}^{N-1} \theta_n \left(e^{-\omega(t_N-t_n)} -1 \right)\, .
	\end{align*}
	It follows that
	\begin{align*}
	\Lambda(t_{N+1}) &= \mu \,t_{N+1} -  \theta_0 \sum_{n=1}^{N} \sum_{n'=1}^n \theta_{n'}  \left(  e^{- \omega\, (t_{n+1}-t_{n'}) }- e^{- \omega\, (t_{n}-t_{n'}) } \right) \\
	&=    \mu \,(t_{N+1}-t_N) -\theta_0 \sum_{n'=1}^N \theta_{n'}  \left(  e^{- \omega\, (t_{N+1}-t_{n'}) }- e^{- \omega\, (t_{N}-t_{n'}) } \right) \\
	& \quad + \mu \,t_N  - \theta_0 \sum_{n=1}^{N-1} \sum_{n'=1}^n \theta_{n'}  \left(  e^{- \omega\, (t_{n+1}-t_{n'}) }- e^{- \omega\, (t_{n}-t_{n'}) } \right) \\
	&=  \mu \,(t_{N+1}-t_N) -\theta_0 \sum_{n'=1}^N \theta_{n'}  \left(  e^{- \omega\, (t_{N+1}-t_{n'}) }- e^{- \omega\, (t_{N}-t_{n'}) } \right) + \Lambda(t_n) \\
	&=    \mu \,(t_{N+1}-t_N) -\theta_0 \sum_{n'=1}^N \theta_{n'}  \left(  e^{- \omega\, (t_{N+1}-t_{n'}) }- e^{- \omega\, (t_{N}-t_{n'}) } \right) \\
	& \quad + \mu \,t_N  -  \theta_0 \sum_{n=1}^{N-1} \theta_n \left(e^{-\omega(t_N-t_n)} -1 \right) \\
	&= \mu \, t_{N+1}  -\theta_0 \sum_{n=1}^N \theta_{n}  \left(  e^{- \omega\, (t_{N+1}-t_{n}) }- e^{- \omega\, (t_{N}-t_{n}) } \right) \\
	& \quad   -  \theta_0 \sum_{n=1}^{N-1} \theta_n \left(e^{-\omega(t_N-t_n)} -1 \right) \\
	&= \mu \, t_{N+1}  - \theta_0 \sum_{n=1}^N \theta_n \left(e^{- \omega\, (t_{N+1}-t_{n}) } - 1 \right) \, ,
	\end{align*}
	thus completing the proof.
\end{proof}

\section{Viruses with statistically significant relative rates}\label{sec:signifTaxa}

The following observed viruses have relative rates for which the 95\% credible intervals have lower bound greater than 1.
\begin{multicols}{2}
	\begin{Verbatim}[numbers=left,fontsize=\tiny]
	EBOV|EM_COY_2015_013857||GIN|Forecariah|2015-03-18
	EBOV|EM_COY_2015_013731||GIN|Coyah|2015-03-14
	EBOV|EM_COY_2015_014261||GIN|Forecariah|2015-04-02
	EBOV|IPDPFHGINSP_GUI_2015_5339||GIN|Conakry|2015-04-08
	EBOV|KG12||GIN|Boke|2015-05-27
	EBOV|EM_COY_2015_017091||GIN|Boke|2015-05-28
	EBOV|EM_COY_2015_016483||GIN|Boke|2015-05-13
	EBOV|EM_COY_2015_016743||GIN|Boke|2015-05-19
	EBOV|KG88||GIN|Boke|2015-06-19
	EBOV|KG90||GIN|Boke|2015-06-19
	EBOV|KG87||GIN|Boke|2015-06-19
	EBOV|KG80||GIN|Boke|2015-06-18
	EBOV|KG35||GIN|Boke|2015-06-08
	EBOV|GUI_CTS_2015_0052||GIN|Boke|2015-06-25
	EBOV|GUI_CTS_2015_0050||GIN|Boke|2015-06-20
	EBOV|GUI_CTS_2015_0051||GIN|Boke|2015-06-21
	EBOV|KG91||GIN|Boke|2015-06-20
	EBOV|KG45||GIN|Boke|2015-06-09
	EBOV|EM_COY_2015_017574||GIN|Boke|2015-06-10
	EBOV|IPDPFHGINSP_GUI_2015_6899||GIN|Boke|2015-05-14
	EBOV|EM_COY_2015_014102||GIN|Conakry|2015-03-26
	EBOV|IPDPFHGINSP_GUI_2015_4909||GIN|Conakry|2015-03-29
	EBOV|IPDPFHGINSP_GUI_2015_5117||GIN|Dubreka|2015-04-03
	EBOV|IPDPFHGINSP_GUI_2015_4786||GIN|Conakry|2015-03-26
	EBOV|EM_COY_2015_014098||GIN|Conakry|2015-03-26
	EBOV|EM_COY_2015_014100||GIN|Conakry|2015-03-26
	EBOV|1648|KR534569|GIN|Kindia|2014-10-23
	EBOV|20140174|KR653294|SLE|WesternUrban|2014-08-27
	EBOV|20144521|KR653271|SLE|Kono|2014-11-12
	EBOV|DML24511|KT357828|SLE|Kono|2015-01-14
	EBOV|20146001|KR653237|SLE|Kono|2014-11-24
	EBOV|20141352|KR653284|SLE|Kambia|2014-09-26
	EBOV|G6069.1|KR105344|SLE|Kambia|2014-09-25
	EBOV|20141997|KR653293|SLE|Kono|2014-10-10
	EBOV|J0008|KP759718|SLE|WesternUrban|2014-09-29
	EBOV|KT2449|KU296465|SLE|WesternUrban|2014-12-05
	EBOV|20143458|KR653245|SLE|Tonkolili|2014-11-01
	EBOV|J0016|KP759742|SLE|WesternRural|2014-09-30
	EBOV|G6104.1|KR105349|SLE|Moyamba|2014-09-28
	EBOV|20141397|KR653288|SLE|Moyamba|2014-09-28
	EBOV|G6103.1|KR105348|SLE|Moyamba|2014-09-28
	EBOV|PL6656|KU296523|SLE|PortLoko|2015-05-01
	EBOV|PL7429|KU296617|SLE|PortLoko|2015-05-29
	EBOV|PL7451|KU296837|SLE|PortLoko|2015-05-29
	EBOV|PL7375|KU296498|SLE|PortLoko|2015-05-27
	EBOV|PL7401|KU296819|SLE|PortLoko|2015-05-27
	EBOV|G4725.1|KR105267|SLE|Moyamba|2014-08-04
	EBOV|J0038|KP759618|SLE|PortLoko|2014-09-30
	EBOV|J0102|KP759657|SLE|PortLoko|2014-10-27
	EBOV|J0103|KP759658|SLE|PortLoko|2014-10-29
	EBOV|J0104|KP759765|SLE|PortLoko|2014-10-29
	EBOV|J0105|KP759766|SLE|PortLoko|2014-10-29
	EBOV|20141282|KR653229|SLE|Kambia|2014-09-23
	EBOV|J0027|KP759678|SLE|Bombali|2014-10-02
	EBOV|G6091.1|KR105346|SLE|Tonkolili|2014-09-27
	EBOV|J0162|KP759622|SLE|Kambia|2014-11-09
	EBOV|J0034|KP759608|SLE|PortLoko|2014-09-30
	EBOV|20141429|KR653260|SLE|Kono|2014-09-28
	EBOV|G6089.1|KR105345|SLE|Tonkolili|2014-09-27
	EBOV|G6095.1|KR105347|SLE|Tonkolili|2014-09-27
	EBOV|20142551|KR653226|SLE|Koinadugu|2014-10-23
	EBOV|20143648|KR653253|SLE|Koinadugu|2014-11-03
	EBOV|20143938|KR653264|SLE|Koinadugu|2014-11-07
	EBOV|20143659|KR653256|SLE|Koinadugu|2014-11-03
	EBOV|20142895|KR653268|SLE|Koinadugu|2014-10-24
	EBOV|20141271|KR653261|SLE|PortLoko|2014-09-24
	EBOV|J0084|KP759753|SLE|PortLoko|2014-10-16
	EBOV|J0083|KP759752|SLE|PortLoko|2014-10-18
	EBOV|J0035|KP759609|SLE|PortLoko|2014-10-01
	EBOV|J0058|KP759724|SLE|PortLoko|2014-10-08
	EBOV|J0071|KP759736|SLE|PortLoko|2014-10-09
	EBOV|20140024|KR653252|SLE|PortLoko|2014-08-20
	EBOV|G4972.1|KR105285|SLE|Kenema|2014-08-14
	EBOV|20140433|KR653246|SLE|Tonkolili|2014-09-03
	EBOV|J0030|KP759606|SLE|PortLoko|2014-10-03
	EBOV|J0011|KP759641|SLE|PortLoko|2014-09-29
	EBOV|20141288|KR653232|SLE|PortLoko|2014-09-23
	EBOV|J0013|KP759643|SLE|PortLoko|2014-09-28
	EBOV|J0014|KP759740|SLE|PortLoko|2014-09-29
	EBOV|J0114|KP759596|SLE|PortLoko|2014-10-29
	EBOV|J0115|KP759597|SLE|PortLoko|2014-10-29
	EBOV|G4956.1|KR105282|SLE|Tonkolili|2014-08-13
	EBOV|20140134|KR653227|SLE|Bombali|2014-08-26
	EBOV|J0020|KP759755|SLE|Bombali|2014-09-25
	EBOV|J0019|KP759754|SLE|Bombali|2014-09-25
	EBOV|J0022|KP759757|SLE|Bombali|2014-09-25
	EBOV|J0017|KP759747|SLE|Bombali|2014-09-26
	EBOV|20142260|KR653262|SLE|Bombali|2014-10-14
	EBOV|J0096|KP759654|SLE|WesternUrban|2014-10-27
	EBOV|J0015|KP759741|SLE|Bombali|2014-09-26
	EBOV|20141232|KR653266|SLE|Tonkolili|2014-09-25
	EBOV|20142127|KR653234|SLE|Bombali|2014-10-12
	EBOV|20141497|KR653273|SLE|Bombali|2014-10-01
	EBOV|G5743.1|KR105323|SLE|Kono|2014-09-17
	EBOV|G5529.1|KR105309|SLE|Kono|2014-09-05
	EBOV|12854_EMLH|KU296729|SLE|WesternUrban|2015-04-15
	EBOV|13031_EMLH|KU296665|SLE|WesternUrban|2015-05-18
	EBOV|DML13828||SLE|WesternUrban|2015-06-22
	EBOV|20143716|KR653292|SLE|Moyamba|2014-11-04
	EBOV|20144610|KR653270|SLE|Tonkolili|2014-11-12
	EBOV|J0021|KP759756|SLE|Bombali|2014-09-29
	EBOV|PL5260|KU296591|SLE|Kambia|2015-03-23
	EBOV|EM_COY_2015_013962||GIN|Forecariah|2015-03-22
	EBOV|EM_COY_2015_013671||GIN|Coyah|2015-03-12
	EBOV|PL4347|KU296720|SLE|Kambia|2015-03-04
	EBOV|EM_COY_2015_015815||GIN|Fria|2015-04-14
	EBOV|IPDPFHGINSP_GUI_2015_5038||GIN|Conakry|2015-04-01
	EBOV|EM_COY_2015_016414||GIN|Dubreka|2015-05-11
	EBOV|EM_COY_2015_017865||GIN|Dubreka|2015-06-18
	EBOV|REDC-GUI-2015-00502||GIN|Conakry|2015-07-13
	EBOV|REDC_GUI_2015_00483||GIN|Conakry|2015-07-12
	EBOV|REDC-GUI-2015-00402||GIN|Conakry|2015-07-08
	EBOV|EM_COY_2015_017057||GIN|Dubreka|2015-05-27
	EBOV|EM_COY_2015_017135||GIN|Dubreka|2015-05-29
	EBOV|EM_COY_2015_016800||GIN|Dubreka|2015-05-20
	EBOV|EM_COY_2015_016617||GIN|Dubreka|2015-05-16
	EBOV|EM_COY_2015_016531||GIN|Dubreka|2015-05-14
	EBOV|PL4094|KU296476|SLE|Kambia|2015-02-27
	EBOV|14795_EMLK|KU296737|SLE|Kambia|2015-04-25
	EBOV|EM_COY_2015_014370||GIN|Forecariah|2015-04-07
	EBOV|EM_COY_2015_013795||GIN|Forecariah|2015-03-16
	EBOV|LIBR10106|KT725337|LBR|Nimba|2014-08-20
	EBOV|LIBR10081|KT725339|LBR|Margibi|2014-08-26
	EBOV|LIBR10017|KT725293|LBR|Montserrado|2014-08-19
	EBOV|LIBR10170|KT725350|LBR|Montserrado|2014-08-17
	EBOV|CDC-NIH-257||LBR|Montserrado|2014-08-27
	EBOV|LIBR10082|KT725300|LBR|Margibi|2014-08-26
	EBOV|LIBR10029|KT725348|LBR|Margibi|2014-08-16
	EBOV|LIBR10030|KT725257|LBR|Margibi|2014-08-16
	EBOV|LIBR10025|KT725315|LBR|Margibi|2014-08-16
	EBOV|LIBR10088|KT725349|LBR|Montserrado|2014-08-26
	EBOV|LIBR10087|KT725346|LBR|Margibi|2014-08-26
	EBOV|LIBR10086|KT725381|LBR|Margibi|2014-08-26
	EBOV|EM_075368|KR817131|GIN|Macenta|2014-08-29
	EBOV|EM_000015|KR817067|GIN|Macenta|2014-09-01
	EBOV|EM_000321|KR817074|GIN|Macenta|2014-09-09
	EBOV|EM_074785|KR817126|GIN|Macenta|2014-08-16
	EBOV|EM_000218|KR817072|GIN|Macenta|2014-09-07
	EBOV|EM_000127|KR817070|GIN|Gueckedou|2014-09-04
	EBOV|EM_000028|KR817069|GIN|Macenta|2014-09-01
	EBOV|EM_000502|KR817078|GIN|Macenta|2014-09-13
	EBOV|EM_001102|KR817089|GIN|Lola|2014-10-05
	EBOV|Conakry_645|KR534513|GIN|Macenta|2014-08-14
	EBOV|EM_074548|KR817123|LBR|Lofa|2014-08-08
	EBOV|EM_078416|KR817161|GIN|Faranah|2014-11-24
	EBOV|EM_078415|KR817160|GIN|Faranah|2014-11-24
	EBOV|EM_075447|KR817134|GIN|Macenta|2014-08-31
	EBOV|EM_075373|KR817132|GIN|Macenta|2014-08-30
	EBOV|789|KR534523|GIN|Dubreka|2014-08-29
	EBOV|1622|KR534567|GIN|Nzerekore|2014-10-22
	EBOV|EM_075435|KR817133|GIN|Kerouane|2014-08-30
	EBOV|EM_004201|KR817093|GIN|Kissidougou|2014-12-24
	EBOV|EM_004259|KR817094|GIN|Kissidougou|2014-12-26
	EBOV|EM_004290|KR817095|GIN|Kissidougou|2014-12-27
	EBOV|EM_078722|KR817175|GIN|Kissidougou|2014-12-16
	EBOV|EM_078779|KR817177|GIN|Kissidougou|2014-12-18
	EBOV|EM_078694|KR817171|GIN|Kissidougou|2014-12-14
	EBOV|EM_004059|KR817090|GIN|Kissidougou|2014-12-20
	EBOV|EM_078706|KR817173|GIN|Kissidougou|2014-12-15
	EBOV|EM_004085|KR817091|GIN|Kissidougou|2014-12-19
	EBOV|EM_078763|KR817176|GIN|Kissidougou|2014-12-17
	EBOV|LIBR0284|KT725328|LBR|GrandCapeMount|2014-11-22
	EBOV|CDC-NIH-3832||LBR|GrandCapeMount|2014-11-23
	EBOV|LIBR0286|KR006952|LBR|GrandCapeMount|2014-11-22
	EBOV|LIBR10103|KT725338|LBR|Nimba|2014-08-20
	EBOV|LIBR10035|KT725259|LBR|GrandBassa|2014-08-15
	EBOV|EM_074438|KR817118|GIN|Nzerekore|2014-08-01
	EBOV|EM_000706|KR817079|GIN|Nzerekore|2014-09-21
	EBOV|EM_074439|KR817119|GIN|Nzerekore|2014-08-01
	EBOV|EM_074436|KR817116|GIN|Nzerekore|2014-08-01
	EBOV|EM_074437|KR817117|GIN|Nzerekore|2014-08-01
	EBOV|LIBR10127|KT725272|LBR|Bong|2014-08-22
	EBOV|LIBR0071|KT725303|LBR|GrandBassa|2014-11-06
	EBOV|LIBR0059|KR006941|LBR|RiverCess|2014-11-05
	EBOV|LIBR0058|KR006940|LBR|RiverCess|2014-11-05
	EBOV|EM095|KM034550|SLE|Kailahun|2014-05-25
	EBOV|G3676|KM034554|SLE|Kailahun|2014-05-27
	\end{Verbatim}
\end{multicols}
\noindent
The following observed viruses have relative rates for which the 95\% credible intervals have upper bound less than 1.
\begin{Verbatim}[numbers=left,fontsize=\tiny]
EBOV|EM_079413|KR817184|GIN|Gueckedou|2014-03-31
EBOV|EM_079414|KR817185|GIN|Gueckedou|2014-03-31
EBOV|EM_079542|KR817199|GIN|Gueckedou|2014-04-12
EBOV|EM_079578|KR817201|GIN|Gueckedou|2014-04-18
EBOV|EM_079587|KR817202|GIN|Gueckedou|2014-04-22
EBOV|EM_079514|KR817197|GIN|Gueckedou|2014-04-10
\end{Verbatim}

\section{Parallelized gradient algorithms}\label{sec:parallel}

Algorithms \ref{alg:lik2} and \ref{alg:lik} present instructions for computing the Hawkes process log-likelihood gradient (Equation \ref{eq:gradient}) with respect to the $M$-vector $\ttheta$ of event-specific rates.  Figure \ref{fig:speedups} shows resulting speedups for both Algorithm \eqref{alg:lik2} and Algorithm \eqref{alg:lik} on a CPU and GPU, respectively.

\newcommand{\bb}{\mathbf{b}}
\newcommand{\vv}{\mathbf{v}}
\newcommand{\aaa}{\mathbf{a}}
\newcommand{\adapt}{\mathbf{l}}

\newcounter{algsubstate}
\renewcommand{\thealgsubstate}{\alph{algsubstate}}
\newenvironment{algsubstates}
{\setcounter{algsubstate}{0}%
	\renewcommand{\State}{%
		\stepcounter{algsubstate}%
		\Statex {\footnotesize\thealgsubstate:}\space}}
{}

\newcommand{\transformR}{r}
\newcommand{\transformCDF}{c}
\newcommand{\threadsPerBlock}{B}

\algblock{ParFor}{EndParFor}
\algnewcommand\algorithmicparfor{\textbf{parfor}}
\algnewcommand\algorithmicpardo{\textbf{do}}
\algnewcommand\algorithmicendparfor{\textbf{end\ parfor}}
\algrenewtext{ParFor}[1]{\algorithmicparfor\ #1\ \algorithmicpardo}
\algrenewtext{EndParFor}{\algorithmicendparfor}

\newcommand{\blockSize}{B}

\newcommand{\llambda}{\boldsymbol{\lambda}}
\newcommand{\Ddelta}{\boldsymbol{\Delta}}

\begin{singlespace}
\begin{algorithm}
	\scriptsize
	\caption{Parallel evaluation of Hawkes process log-likelihood gradient: \\
		\emph{Makes use of multiple central processing unit (CPU) cores and loop vectorization to calculate Hawkes process log-likelihood gradient with respect to event-specific rates $\ttheta$.  When using double-precision floating point, this algorithm may use either SSE or AVX vectorization to make $J=2$- or $4$-long jumps. We denote the number of CPU cores as $B$.   Symbols $\ell$, $\lambda$ and $\Lambda$ appear in Equations \eqref{eq:likelihood} and \eqref{eq:gradient}.}
	}\label{alg:lik2}
	\begin{algorithmic}[1]
		\State Compute rates $\lambda_1, \dots, \lambda_N$:
		\begin{algsubstates}
			\State \hspace{1em} \textbf{parfor} $b \in \{1,\dots,B\}$ \textbf{do}
			\State \hspace{2em} \textbf{if} $b\neq B$ \textbf{then}
			\State \hspace{3em}$Upper \gets b \lfloor  N/B  \rfloor$
			\State \hspace{2em} \textbf{else}
			\State \hspace{3em} $Upper \gets  \lceil  N/B  \rceil$
			\State \hspace{2em} \textbf{end if}
			\State \hspace{2em} \textbf{for} $n \in  \{ (b-1)\lfloor N/B  \rfloor + 1,  \dots,Upper \}$ \textbf{do}
			\State \hspace{3em} copy $\x_{n}$, $t_{n}$ to cache
			\State \hspace{3em} $\llambda_{n} \gets \mathbf{0}$ \Comment{vector of length J}
			\State  \hspace{3em} $n' \gets 1$
			\State  \hspace{3em} \textbf{while} $n' < N$ \textbf{do}
			\State \hspace{4em} $J \gets \min(J,N-n')$
			\State \hspace{4em} copy $\x_{n':(n'+J)}$, $t_{n':(n'+J)}$ to cache
			\State \hspace{4em} $\Ddelta_{nn'}:\Ddelta_{nn':(n'+J-1)} \gets (\x_{n} - \x_{n'}):(\x_{n} - \x_{n'+J-1})$ \Comment{vectorized subtraction}
			\State \hspace{4em} calculate $\delta_{nn'}:\delta_{n(n'+J-1)}$  \Comment{vectorized multiplication}
			\State \hspace{4em} calculate $\lambda_{nn'}:\lambda_{n(n'+J-1)}$  \Comment{vectorized evaluation}
			\State \hspace{4em} $\llambda_{n} \gets \llambda_{n} +  \lambda_{nn'}:\lambda_{n(n'+J-1)}$		\Comment{vectorized addition}
			\State \hspace{4em} $n' \gets n' + J$
			\State \hspace{3em} \textbf{end while}
			\State \hspace{2em} \textbf{end for}
			\State \hspace{1em} \textbf{end parfor}
		\end{algsubstates}

		\State Compute $M$ gradients $\frac{\partial \ell}{\partial \theta_n}$:
		\begin{algsubstates}
			\State \hspace{1em} \textbf{parfor} $b \in \{1,\dots,B\}$ \textbf{do}
			\State \hspace{2em} \textbf{if} $b\neq B$ \textbf{then}
			\State \hspace{3em}$Upper \gets b \lfloor  M/B  \rfloor$
			\State \hspace{2em} \textbf{else}
			\State \hspace{3em} $Upper \gets  \lceil  M/B  \rceil$
			\State \hspace{2em} \textbf{end if}
			\State \hspace{2em} \textbf{for} $n \in  \{ (b-1)\lfloor M/B  \rfloor + 1,  \dots,Upper \}$ \textbf{do}
			\State \hspace{3em} copy $\x_{n}$, $t_{n}$ to cache
			\State \hspace{3em} $\frac{\partial \ell}{\partial \theta_n} \gets 0$
			\State  \hspace{3em} $n' \gets 1$
			\State  \hspace{3em} \textbf{while} $n' < N$ \textbf{do}
			\State \hspace{4em} $J \gets \min(J,N-n')$
			\State \hspace{4em} copy $\x_{n':(n'+J)}$, $t_{n':(n'+J)}$ to cache
			\State \hspace{4em} $\Ddelta_{nn'}:\Ddelta_{nn':(n'+J-1)} \gets (\x_{n} - \x_{n'}):(\x_{n} - \x_{n'+J-1})$ \Comment{vectorized subtraction}
			\State \hspace{4em} calculate $\delta_{nn'}:\delta_{n(n'+J-1)}$  \Comment{vectorized multiplication}
			\State \hspace{4em} calculate $e^{- \omega\, (t_{n'}-t_n) }  \phi\left(\frac{\delta_{nn'}}{h}\right):e^{- \omega\, (t_{n'+J-1}-t_n) }  \phi\left(\frac{\delta_{n(n'+J-1)}}{h}\right)$  \Comment{vectorized evaluation}
			\State \hspace{4em} \textbf{for} $j \in {n',\dots,n'+J-1}$ \textbf{do}
			\State \hspace{5em} $\frac{\partial \ell}{\partial \theta_n} \gets \frac{\partial \ell}{\partial \theta_n}  + \mathcal{I}_{[t_n<t_{j}]} \frac{1}{\lambda_{j}}  \frac{\partial \lambda_{jn}}{\partial \theta_n} $
			\State \hspace{4em} \textbf{end for}
			\State \hspace{4em} $n' \gets n' + J$
			\State \hspace{3em} \textbf{end while}
			\State \hspace{2em} $\frac{\partial \ell}{\partial \theta_n} \gets \frac{\partial \ell}{\partial \theta_n} +  \theta_0  \left(e^{- \omega\, (t_{N}-t_{n}) } - 1 \right)$
			\State \hspace{2em} \textbf{end for}
			\State \hspace{1em} \textbf{end parfor}
		\end{algsubstates}
	\end{algorithmic}
\end{algorithm}

\end{singlespace}

\begin{algorithm}
	\scriptsize
	\caption{Parallel evaluation of Hawkes process log-likelihood gradient: \\
		\emph{Computes the log-likelihood gradient with respect to event-specific rates $\ttheta$ using multiple levels of parallelization on a graphics processing unit (GPU).   In this paper, we specify $B=128$ for the size of the GPU work groups.  Symbols $\ell$, $\lambda$ and $\Lambda$ appear in Equations \eqref{eq:likelihood} and \eqref{eq:gradient}.}
	}\label{alg:lik}
	\begin{algorithmic}[1]
		\State Compute rates $\lambda_1, \dots, \lambda_N$:
		\begin{algsubstates}
			\State	\hspace{1em}	\textbf{parfor} $n \in \{1,\dots,N\}$ \textbf{do}
			\State\hspace{2em} copy $\x_n$, $t_n$ to local \Comment{$B$ threads}
			\State\hspace{2em}	\textbf{parfor} $N'\in \{1,\dots,\lfloor N/B\rfloor\}$ \textbf{do}
			\State\hspace{3em} $n' \gets N'$
			\State\hspace{3em} $\lambda_{nN'} \gets 0$
			\State\hspace{3em}\textbf{while} $n' < N$ \textbf{do}
			\State\hspace{4em} copy $\x_{n'}$, $t_{n'}$ to local \Comment{$B$ threads}
			\State\hspace{4em} $\Ddelta_{nn'} \gets \x_n - \x_{n'}$ \Comment{vectorized subtraction}
			\State\hspace{4em} calculate $\delta_{nn'} = \sqrt{\sum \Ddelta_{nn'}\circ \Ddelta_{nn'}}$  \Comment{vectorized multiplication}
			\State\hspace{4em} $\lambda_{nN'} \gets \lambda_{nN'}  +\lambda_{nn'}$ \Comment{$\lambda_{nn'}$ a function of $\delta_{nn'}$, $t_n$ and $t_{n'}$}
			\State\hspace{4em} $n' \gets n' + B$
			\State\hspace{3em}	\textbf{end while}
			\State\hspace{2em}\textbf{end parfor}
			\State\hspace{2em} $\lambda_n\gets \sum_{N'} \lambda_{nN'} $   \Comment{binary tree reduction on chip}
			\State\hspace{1em}	\textbf{end parfor}
		\end{algsubstates}

		\State Compute $M$ gradients $\frac{\partial \ell}{\partial \theta_n}$:
		\begin{algsubstates}
			\State	\hspace{1em}	\textbf{parfor} $n \in \{1,\dots,M\}$ \textbf{do}
			\State\hspace{2em} copy $\x_n$, $t_n$ to local \Comment{$B$ threads}
			\State\hspace{2em}	\textbf{parfor} $N'\in \{1,\dots,\lfloor N/B\rfloor\}$ \textbf{do}
			\State\hspace{3em} $n' \gets N'$
			\State\hspace{3em} $\left(\frac{\partial \ell}{\partial \theta_n}\right)_{N'} \gets 0$
			\State\hspace{3em}\textbf{while} $n' < N$ \textbf{do}
			\State\hspace{4em} copy $\x_{n'}$, $t_{n'}$ to local \Comment{$B$ threads}
			\State\hspace{4em} $\Ddelta_{nn'} \gets \x_n - \x_{n'}$ \Comment{vectorized subtraction}
			\State\hspace{4em} calculate $\delta_{nn'} = \sqrt{\sum \Ddelta_{nn'}\circ \Ddelta_{nn'}}$  \Comment{vectorized multiplication}
			\State\hspace{4em} $\left(\frac{\partial \ell}{\partial \theta_n}\right)_{N'} \gets \left(\frac{\partial \ell}{\partial \theta_n}\right)_{N'}  + \mathcal{I}_{[t_n<t_{n'}]} \frac{1}{\lambda_{n'}}  \frac{\partial \lambda_{n'n}}{\partial \theta_n} $
			\State\hspace{4em} $n' \gets n' + B$
			\State\hspace{3em}	\textbf{end while}
			\State\hspace{2em}\textbf{end parfor}
			\State\hspace{2em} $\frac{\partial \ell}{\partial \theta_n}\gets \sum_{N'}\left(\frac{\partial \ell}{\partial \theta_n}\right)_{N'} $   \Comment{binary tree reduction on chip}
			\State\hspace{2em} $\frac{\partial \ell}{\partial \theta_n}\gets \frac{\partial \ell}{\partial \theta_n} + \theta_0  \left(e^{- \omega\, (t_{N}-t_{n}) } - 1 \right)$
			\State\hspace{1em}	\textbf{end parfor}
		\end{algsubstates}

	\end{algorithmic}
\end{algorithm}

\bibliographystyle{sysbio}
\bibliography{refs}

\begin{thebibliography}{79}
\providecommand{\natexlab}[1]{#1}
\providecommand{\selectlanguage}[1]{\relax}
\providecommand{\bibAnnoteFile}[1]{%
  \IfFileExists{#1}{\begin{quotation}\noindent\textsc{Key:} #1\\
  \textsc{Annotation:}\ \input{#1}\end{quotation}}{}}
\providecommand{\bibAnnote}[2]{%
  \begin{quotation}\noindent\textsc{Key:} #1\\
  \textsc{Annotation:}\ #2\end{quotation}}

\bibitem[{Bacry et~al.(2015)Bacry, Mastromatteo, and Muzy}]{bacry2015hawkes}
Bacry, E., I.~Mastromatteo, and J.-F. Muzy. 2015. Hawkes processes in finance.
  Market Microstructure and Liquidity 1:1550005.
\bibAnnoteFile{bacry2015hawkes}

\bibitem[{Baize et~al.(2014)Baize, Pannetier, Oestereich, Rieger, Koivogui,
  Magassouba, Soropogui, Sow, Ke{\"\i}ta, De~Clerck
  et~al.}]{baize2014emergence}
Baize, S., D.~Pannetier, L.~Oestereich, T.~Rieger, L.~Koivogui, N.~Magassouba,
  B.~Soropogui, M.~S. Sow, S.~Ke{\"\i}ta, H.~De~Clerck, et~al. 2014. Emergence
  of zaire ebola virus disease in guinea. New England Journal of Medicine
  371:1418--1425.
\bibAnnoteFile{baize2014emergence}

\bibitem[{Bertozzi et~al.(2020)Bertozzi, Franco, Mohler, Short, and
  Sledge}]{bertozzi2020challenges}
Bertozzi, A.~L., E.~Franco, G.~Mohler, M.~B. Short, and D.~Sledge. 2020. The
  challenges of modeling and forecasting the spread of covid-19. Proceedings of
  the National Academy of Sciences 117:16732--16738.
\bibAnnoteFile{bertozzi2020challenges}

\bibitem[{Boni et~al.(2020)Boni, Lemey, Jiang, Lam, Perry, Castoe, Rambaut, and
  Robertson}]{boni2020evolutionary}
Boni, M.~F., P.~Lemey, X.~Jiang, T.~T.-Y. Lam, B.~W. Perry, T.~A. Castoe,
  A.~Rambaut, and D.~L. Robertson. 2020. Evolutionary origins of the sars-cov-2
  sarbecovirus lineage responsible for the covid-19 pandemic. Nature
  Microbiology 5:1408--1417.
\bibAnnoteFile{boni2020evolutionary}

\bibitem[{Brockmann and Helbing(2013)}]{brockmann2013hidden}
Brockmann, D. and D.~Helbing. 2013. The hidden geometry of complex,
  network-driven contagion phenomena. science 342:1337--1342.
\bibAnnoteFile{brockmann2013hidden}

\bibitem[{Cavalli-Sforza and Edwards(1967)}]{cavalli1967phylogenetic}
Cavalli-Sforza, L.~L. and A.~W. Edwards. 1967. Phylogenetic analysis. models
  and estimation procedures. American Journal of Human Genetics 19:233--257.
\bibAnnoteFile{cavalli1967phylogenetic}

\bibitem[{Chiang et~al.(2020)Chiang, Liu, and Mohler}]{chiang2020hawkes}
Chiang, W.-H., X.~Liu, and G.~Mohler. 2020. Hawkes process modeling of covid-19
  with mobility leading indicators and spatial covariates. medRxiv .
\bibAnnoteFile{chiang2020hawkes}

\bibitem[{Cybis et~al.(2015)Cybis, Sinsheimer, Bedford, Mather, Lemey, and
  Suchard}]{cybis2015assessing}
Cybis, G., J.~Sinsheimer, T.~Bedford, A.~Mather, P.~Lemey, and M.~Suchard.
  2015. Assessing phenotypic correlation through the multivariate phylogenetic
  latent liability model. Annals of Applied Statistics 9:969 -- 991.
\bibAnnoteFile{cybis2015assessing}

\bibitem[{Daley and Jones(2003)}]{daley2003introduction}
Daley, D.~J. and D.~V. Jones. 2003. An Introduction to the Theory of Point
  Processes: Elementary Theory of Point Processes. Springer.
\bibAnnoteFile{daley2003introduction}

\bibitem[{Dassios and Zhao(2011)}]{dassios2011dynamic}
Dassios, A. and H.~Zhao. 2011. A dynamic contagion process. Advances in applied
  probability 43:814--846.
\bibAnnoteFile{dassios2011dynamic}

\bibitem[{Drummond et~al.(2006{\natexlab{a}})Drummond, Ho, Phillips, and
  Rambaut}]{drummond06}
Drummond, A., S.~Ho, M.~Phillips, and A.~Rambaut. 2006{\natexlab{a}}. Relaxed
  phylogenetics and dating with confidence. PLoS Biology 4:e88.
\bibAnnoteFile{drummond06}

\bibitem[{Drummond et~al.(2006{\natexlab{b}})Drummond, Ho, Phillips, and
  Rambaut}]{drummond2006relaxed}
Drummond, A.~J., S.~Y. Ho, M.~J. Phillips, and A.~Rambaut. 2006{\natexlab{b}}.
  Relaxed phylogenetics and dating with confidence. PLoS Biol 4:e88.
\bibAnnoteFile{drummond2006relaxed}

\bibitem[{Dudas et~al.(2017)Dudas, Carvalho, Bedford, Tatem, Baele, Faria,
  Park, Ladner, Arias, Asogun et~al.}]{dudas2017virus}
Dudas, G., L.~M. Carvalho, T.~Bedford, A.~J. Tatem, G.~Baele, N.~R. Faria,
  D.~J. Park, J.~T. Ladner, A.~Arias, D.~Asogun, et~al. 2017. Virus genomes
  reveal factors that spread and sustained the ebola epidemic. Nature
  544:309--315.
\bibAnnoteFile{dudas2017virus}

\bibitem[{Eddelbuettel and Fran\c{c}ois(2011)}]{eddelbuettel2011rcpp}
Eddelbuettel, D. and R.~Fran\c{c}ois. 2011. {Rcpp}: Seamless {R} and {C++}
  integration. Journal of Statistical Software 40:1--18.
\bibAnnoteFile{eddelbuettel2011rcpp}

\bibitem[{Faria et~al.(2014)Faria, Rambaut, Suchard, Baele, Bedford, Ward,
  Tatem, Sousa, Arinaminpathy, P{\'e}pin et~al.}]{faria2014early}
Faria, N.~R., A.~Rambaut, M.~A. Suchard, G.~Baele, T.~Bedford, M.~J. Ward,
  A.~J. Tatem, J.~D. Sousa, N.~Arinaminpathy, J.~P{\'e}pin, et~al. 2014. The
  early spread and epidemic ignition of hiv-1 in human populations. science
  346:56--61.
\bibAnnoteFile{faria2014early}

\bibitem[{Felsenstein(1978)}]{felsenstein1978number}
Felsenstein, J. 1978. The number of evolutionary trees. Systematic zoology
  27:27--33.
\bibAnnoteFile{felsenstein1978number}

\bibitem[{Felsenstein(1985)}]{felsenstein85}
Felsenstein, J. 1985. Phylogenies and the comparative method. American
  Naturalist 125:1--15.
\bibAnnoteFile{felsenstein85}

\bibitem[{Ferreira and Suchard(2008)}]{ferreira2008bayesian}
Ferreira, M.~A. and M.~A. Suchard. 2008. Bayesian analysis of elapsed times in
  continuous-time markov chains. Canadian Journal of Statistics 36:355--368.
\bibAnnoteFile{ferreira2008bayesian}

\bibitem[{Fisher et~al.(2021)Fisher, Ji, Zhang, Lemey, and
  Suchard}]{fisher2021relaxed}
Fisher, A.~A., X.~Ji, Z.~Zhang, P.~Lemey, and M.~A. Suchard. 2021. Relaxed
  random walks at scale. Systematic Biology .
\bibAnnoteFile{fisher2021relaxed}

\bibitem[{Fox et~al.(2016)Fox, Schoenberg, Gordon et~al.}]{fox2016spatially}
Fox, E.~W., F.~P. Schoenberg, J.~S. Gordon, et~al. 2016. Spatially
  inhomogeneous background rate estimators and uncertainty quantification for
  nonparametric hawkes point process models of earthquake occurrences. The
  Annals of Applied Statistics 10:1725--1756.
\bibAnnoteFile{fox2016spatially}

\bibitem[{Freckleton(2012)}]{freckleton2012fast}
Freckleton, R.~P. 2012. Fast likelihood calculations for comparative analyses.
  Methods in Ecology and Evolution 3:940--947.
\bibAnnoteFile{freckleton2012fast}

\bibitem[{Gill et~al.(2013)Gill, Lemey, Faria, Rambaut, Shapiro, and
  Suchard}]{gill2013improving}
Gill, M.~S., P.~Lemey, N.~R. Faria, A.~Rambaut, B.~Shapiro, and M.~A. Suchard.
  2013. Improving bayesian population dynamics inference: a coalescent-based
  model for multiple loci. Molecular biology and evolution 30:713--724.
\bibAnnoteFile{gill2013improving}

\bibitem[{Gire et~al.(2014)Gire, Goba, Andersen, Sealfon, Park, Kanneh, Jalloh,
  Momoh, Fullah, Dudas et~al.}]{gire2014genomic}
Gire, S.~K., A.~Goba, K.~G. Andersen, R.~S. Sealfon, D.~J. Park, L.~Kanneh,
  S.~Jalloh, M.~Momoh, M.~Fullah, G.~Dudas, et~al. 2014. Genomic surveillance
  elucidates ebola virus origin and transmission during the 2014 outbreak.
  science 345:1369--1372.
\bibAnnoteFile{gire2014genomic}

\bibitem[{Haario et~al.(2001)Haario, Saksman, Tamminen
  et~al.}]{haario2001adaptive}
Haario, H., E.~Saksman, J.~Tamminen, et~al. 2001. An adaptive metropolis
  algorithm. Bernoulli 7:223--242.
\bibAnnoteFile{haario2001adaptive}

\bibitem[{Habbema et~al.(1974)Habbema, JDF, Van~den Broek
  et~al.}]{habbema1974stepwise}
Habbema, J., H.~JDF, K.~Van~den Broek, et~al. 1974. A stepwise discriminant
  analysis program using density estimation. .
\bibAnnoteFile{habbema1974stepwise}

\bibitem[{Hasegawa et~al.(1985)Hasegawa, Kishino, and
  Yano}]{hasegawa1985dating}
Hasegawa, M., H.~Kishino, and T.-a. Yano. 1985. Dating of the human-ape
  splitting by a molecular clock of mitochondrial dna. Journal of molecular
  evolution 22:160--174.
\bibAnnoteFile{hasegawa1985dating}

\bibitem[{Hawkes(1972)}]{hawkes1972spectra}
Hawkes, A. 1972. Spectra of some mutually exciting point processes with
  associated variables. Stochastic point processes Pages~261--271.
\bibAnnoteFile{hawkes1972spectra}

\bibitem[{Hawkes(1973)}]{hawkes1973cluster}
Hawkes, A. 1973. Cluster models for earthquakes-regional comparisons. Bull.
  Int. Stat. Inst. 45:454--461.
\bibAnnoteFile{hawkes1973cluster}

\bibitem[{Hawkes(1971{\natexlab{a}})}]{hawkes1971point}
Hawkes, A.~G. 1971{\natexlab{a}}. Point spectra of some mutually exciting point
  processes. Journal of the Royal Statistical Society: Series B
  (Methodological) 33:438--443.
\bibAnnoteFile{hawkes1971point}

\bibitem[{Hawkes(1971{\natexlab{b}})}]{hawkes1971spectra}
Hawkes, A.~G. 1971{\natexlab{b}}. Spectra of some self-exciting and mutually
  exciting point processes. Biometrika 58:83--90.
\bibAnnoteFile{hawkes1971spectra}

\bibitem[{Hawkes(2018)}]{hawkes2018hawkes}
Hawkes, A.~G. 2018. Hawkes processes and their applications to finance: a
  review. Quantitative Finance 18:193--198.
\bibAnnoteFile{hawkes2018hawkes}

\bibitem[{Ho and An\'e(2014)}]{ho2014linear}
Ho, L. S.~T. and C.~An\'e. 2014. A linear-time algorithm for {G}aussian and
  non-{G}aussian trait evolution models. Systematic Biology 3:397--402.
\bibAnnoteFile{ho2014linear}

\bibitem[{Holbrook et~al.(2020)Holbrook, Lemey, Baele, Dellicour, Brockmann,
  Rambaut, and Suchard}]{holbrook2020massive}
Holbrook, A.~J., P.~Lemey, G.~Baele, S.~Dellicour, D.~Brockmann, A.~Rambaut,
  and M.~A. Suchard. 2020. Massive parallelization boosts big bayesian
  multidimensional scaling. Journal of Computational and Graphical Statistics
  Pages~1--34.
\bibAnnoteFile{holbrook2020massive}

\bibitem[{Holbrook et~al.(2021)Holbrook, Loeffler, Flaxman, and
  Suchard}]{holbrook2021scalable}
Holbrook, A.~J., C.~E. Loeffler, S.~R. Flaxman, and M.~A. Suchard. 2021.
  Scalable bayesian inference for self-excitatory stochastic processes applied
  to big american gunfire data. Statistics and Computing 31:1--15.
\bibAnnoteFile{holbrook2021scalable}

\bibitem[{Kahle and Wickham(2013)}]{kahle2013ggmap}
Kahle, D. and H.~Wickham. 2013. ggmap: Spatial visualization with ggplot2. The
  R journal 5:144--161.
\bibAnnoteFile{kahle2013ggmap}

\bibitem[{Kelly et~al.(2019)Kelly, Park, Harrigan, Hoff, Lee, Wannier, Selo,
  Mossoko, Njoloko, Okitolonda-Wemakoy et~al.}]{kelly2019real}
Kelly, J.~D., J.~Park, R.~J. Harrigan, N.~A. Hoff, S.~D. Lee, R.~Wannier,
  B.~Selo, M.~Mossoko, B.~Njoloko, E.~Okitolonda-Wemakoy, et~al. 2019.
  Real-time predictions of the 2018--2019 ebola virus disease outbreak in the
  democratic republic of the congo using hawkes point process models. Epidemics
  28:100354.
\bibAnnoteFile{kelly2019real}

\bibitem[{Kim(2011)}]{kim2011spatio}
Kim, H. 2011. Spatio-temporal point process models for the spread of avian
  influenza virus (H5N1). Ph.D. thesis UC Berkeley.
\bibAnnoteFile{kim2011spatio}

\bibitem[{Kobayashi and Lambiotte(2016)}]{kobayashi2016tideh}
Kobayashi, R. and R.~Lambiotte. 2016. Tideh: Time-dependent hawkes process for
  predicting retweet dynamics. \emph{in} Proceedings of the International AAAI
  Conference on Web and Social Media vol.~10.
\bibAnnoteFile{kobayashi2016tideh}

\bibitem[{Leimkuhler and Reich(2004)}]{leimkuhler2004simulating}
Leimkuhler, B. and S.~Reich. 2004. Simulating {H}amiltonian dynamics vol.~14.
  Cambridge university press.
\bibAnnoteFile{leimkuhler2004simulating}

\bibitem[{Lemey et~al.(2009)Lemey, Rambaut, Drummond, and
  Suchard}]{lemey2009bayesian}
Lemey, P., A.~Rambaut, A.~Drummond, and M.~Suchard. 2009. Bayesian
  phylogeography finds its roots. PLoS Computational Biology 5:e1000520.
\bibAnnoteFile{lemey2009bayesian}

\bibitem[{Lemey et~al.(2010)Lemey, Rambaut, Welch, and
  Suchard}]{lemey2010phylogeography}
Lemey, P., A.~Rambaut, J.~Welch, and M.~Suchard. 2010. Phylogeography takes a
  relaxed random walk in continuous space and time. Molecular Biology and
  Evolution 27:1877--1885.
\bibAnnoteFile{lemey2010phylogeography}

\bibitem[{Loeffler and Flaxman(2018)}]{loeffler2018gun}
Loeffler, C. and S.~Flaxman. 2018. Is gun violence contagious? a spatiotemporal
  test. Journal of quantitative criminology 34:999--1017.
\bibAnnoteFile{loeffler2018gun}

\bibitem[{Mau et~al.(1999)Mau, Newton, and Larget}]{mau1999bayesian}
Mau, B., M.~A. Newton, and B.~Larget. 1999. Bayesian phylogenetic inference via
  markov chain monte carlo methods. Biometrics 55:1--12.
\bibAnnoteFile{mau1999bayesian}

\bibitem[{Mei and Eisner(2017)}]{mei2017neural}
Mei, H. and J.~M. Eisner. 2017. The neural {H}awkes process: A neurally
  self-modulating multivariate point process. Pages~6754--6764 \emph{in}
  Advances in Neural Information Processing Systems.
\bibAnnoteFile{mei2017neural}

\bibitem[{Meyer et~al.(2014)Meyer, Held et~al.}]{meyer2014power}
Meyer, S., L.~Held, et~al. 2014. Power-law models for infectious disease
  spread. The Annals of Applied Statistics 8:1612--1639.
\bibAnnoteFile{meyer2014power}

\bibitem[{Mohler(2014)}]{mohler2014marked}
Mohler, G. 2014. Marked point process hotspot maps for homicide and gun crime
  prediction in chicago. International Journal of Forecasting 30:491--497.
\bibAnnoteFile{mohler2014marked}

\bibitem[{Mohler et~al.(2013)}]{mohler2013modeling}
Mohler, G. et~al. 2013. Modeling and estimation of multi-source clustering in
  crime and security data. The Annals of Applied Statistics 7:1525--1539.
\bibAnnoteFile{mohler2013modeling}

\bibitem[{Neal(2011)}]{neal2011mcmc}
Neal, R.~M. 2011. {MCMC} using {H}amiltonian dynamics. Handbook of Markov Chain
  Monte Carlo 2.
\bibAnnoteFile{neal2011mcmc}

\bibitem[{Nickel and Le(2020)}]{nickel2020learning}
Nickel, M. and M.~Le. 2020. Learning multivariate hawkes processes at scale.
  arXiv preprint arXiv:2002.12501 .
\bibAnnoteFile{nickel2020learning}

\bibitem[{Ogata(1988)}]{ogata1988statistical}
Ogata, Y. 1988. Statistical models for earthquake occurrences and residual
  analysis for point processes. Journal of the American Statistical association
  83:9--27.
\bibAnnoteFile{ogata1988statistical}

\bibitem[{Park et~al.(2019)Park, Schoenberg, Bertozzi, and
  Brantingham}]{park2019investigating}
Park, J., F.~P. Schoenberg, A.~L. Bertozzi, and P.~J. Brantingham. 2019.
  Investigating clustering and violence interruption in gang-related violent
  crime data using spatial-temporal point processes with covariates .
\bibAnnoteFile{park2019investigating}

\bibitem[{Pearl(1982)}]{pearl1982reverend}
Pearl, J. 1982. Reverend {B}ayes on inference engines: A distributed
  hierarchical approach. Pages~133--136 \emph{in} AAAI-82: Proceedings of the
  Second National Conference on Artificial Intelligence.
\bibAnnoteFile{pearl1982reverend}

\bibitem[{Pybus et~al.(2012)Pybus, Suchard, Lemey, Bernardin, Rambaut,
  Crawford, Gray, Arinaminpathy, Stramer, Busch et~al.}]{pybus2012unifying}
Pybus, O.~G., M.~A. Suchard, P.~Lemey, F.~J. Bernardin, A.~Rambaut, F.~W.
  Crawford, R.~R. Gray, N.~Arinaminpathy, S.~L. Stramer, M.~P. Busch, et~al.
  2012. Unifying the spatial epidemiology and molecular evolution of emerging
  epidemics. Proceedings of the National Academy of Sciences 109:15066--15071.
\bibAnnoteFile{pybus2012unifying}

\bibitem[{Rambaut et~al.(2008)Rambaut, Pybus, Nelson, Viboud, Taubenberger, and
  Holmes}]{rambaut2008genomic}
Rambaut, A., O.~G. Pybus, M.~I. Nelson, C.~Viboud, J.~K. Taubenberger, and
  E.~C. Holmes. 2008. The genomic and epidemiological dynamics of human
  influenza a virus. Nature 453:615--619.
\bibAnnoteFile{rambaut2008genomic}

\bibitem[{Reinhart(2018)}]{reinhart2018review}
Reinhart, A. 2018. A review of self-exciting spatio-temporal point processes
  and their applications. Statistical Science 33:299--318.
\bibAnnoteFile{reinhart2018review}

\bibitem[{Rizoiu et~al.(2017)Rizoiu, Lee, Mishra, and Xie}]{rizoiu2017tutorial}
Rizoiu, M.-A., Y.~Lee, S.~Mishra, and L.~Xie. 2017. A tutorial on hawkes
  processes for events in social media. arXiv preprint arXiv:1708.06401 .
\bibAnnoteFile{rizoiu2017tutorial}

\bibitem[{Rizoiu et~al.(2018)Rizoiu, Mishra, Kong, Carman, and
  Xie}]{rizoiu2018sir}
Rizoiu, M.-A., S.~Mishra, Q.~Kong, M.~Carman, and L.~Xie. 2018. Sir-{H}awkes:
  Linking epidemic models and {H}awkes processes to model diffusions in finite
  populations. Pages~419--428 \emph{in} Proceedings of the 2018 World Wide Web
  Conference on World Wide Web International World Wide Web Conferences
  Steering Committee.
\bibAnnoteFile{rizoiu2018sir}

\bibitem[{Robert(1976)}]{robert1976choice}
Robert, P. 1976. On the choice of smoothing parameters for parzen estimators of
  probability density functions. IEEE Transactions on Computers 25:1175--1179.
\bibAnnoteFile{robert1976choice}

\bibitem[{Ronquist et~al.(2012)Ronquist, Teslenko, Van Der~Mark, Ayres,
  Darling, H{\"o}hna, Larget, Liu, Suchard, and
  Huelsenbeck}]{ronquist2012mrbayes}
Ronquist, F., M.~Teslenko, P.~Van Der~Mark, D.~L. Ayres, A.~Darling,
  S.~H{\"o}hna, B.~Larget, L.~Liu, M.~A. Suchard, and J.~P. Huelsenbeck. 2012.
  Mrbayes 3.2: efficient bayesian phylogenetic inference and model choice
  across a large model space. Systematic biology 61:539--542.
\bibAnnoteFile{ronquist2012mrbayes}

\bibitem[{Salmon et~al.(2011)Salmon, Moraes, Dror, and
  Shaw}]{salmon2011parallel}
Salmon, J.~K., M.~A. Moraes, R.~O. Dror, and D.~E. Shaw. 2011. Parallel random
  numbers: as easy as 1, 2, 3. Pages~1--12 \emph{in} Proceedings of 2011
  International Conference for High Performance Computing, Networking, Storage
  and Analysis.
\bibAnnoteFile{salmon2011parallel}

\bibitem[{Schoenberg(2004)}]{schoenberg2004testing}
Schoenberg, F.~P. 2004. Testing separability in spatial-temporal marked point
  processes. Biometrics Pages~471--481.
\bibAnnoteFile{schoenberg2004testing}

\bibitem[{Schoenberg(2013)}]{schoenberg2013facilitated}
Schoenberg, F.~P. 2013. Facilitated estimation of etas. Bulletin of the
  Seismological Society of America 103:601--605.
\bibAnnoteFile{schoenberg2013facilitated}

\bibitem[{Schoenberg(2020)}]{schoenberg2020nonparametric}
Schoenberg, F.~P. 2020. Nonparametric estimation of variable productivity
  hawkes processes. arXiv preprint arXiv:2003.08858 .
\bibAnnoteFile{schoenberg2020nonparametric}

\bibitem[{Schoenberg et~al.(2019)Schoenberg, Hoffmann, and
  Harrigan}]{schoenberg2019recursive}
Schoenberg, F.~P., M.~Hoffmann, and R.~J. Harrigan. 2019. A recursive point
  process model for infectious diseases. Annals of the Institute of Statistical
  Mathematics 71:1271--1287.
\bibAnnoteFile{schoenberg2019recursive}

\bibitem[{Sinsheimer et~al.(1996)Sinsheimer, Lake, and
  Little}]{sinsheimer1996bayesian}
Sinsheimer, J.~S., J.~A. Lake, and R.~J. Little. 1996. Bayesian hypothesis
  testing of four-taxon topologies using molecular sequence data. Biometrics
  Pages~193--210.
\bibAnnoteFile{sinsheimer1996bayesian}

\bibitem[{Smith et~al.(2009)Smith, Vijaykrishna, Bahl, Lycett, Worobey, Pybus,
  Ma, Cheung, Raghwani, Bhatt et~al.}]{smith2009origins}
Smith, G.~J., D.~Vijaykrishna, J.~Bahl, S.~J. Lycett, M.~Worobey, O.~G. Pybus,
  S.~K. Ma, C.~L. Cheung, J.~Raghwani, S.~Bhatt, et~al. 2009. Origins and
  evolutionary genomics of the 2009 swine-origin h1n1 influenza a epidemic.
  Nature 459:1122--1125.
\bibAnnoteFile{smith2009origins}

\bibitem[{Suchard et~al.(2003)Suchard, Kitchen, Sinsheimer, and
  Weiss}]{Suchard03HPM}
Suchard, M., C.~Kitchen, J.~Sinsheimer, and R.~Weiss. 2003. Hierarchical
  phylogenetic models for analyzing multipartite sequence data. Systematic
  Biology 52:649--664.
\bibAnnoteFile{Suchard03HPM}

\bibitem[{Suchard and Rambaut(2009)}]{suchard2009many}
Suchard, M. and A.~Rambaut. 2009. Many-core algorithms for statistical
  phylogenetics. Bioinformatics 25:1370--1376.
\bibAnnoteFile{suchard2009many}

\bibitem[{Suchard et~al.(2001)Suchard, Weiss, and Sinsheimer}]{suchard01}
Suchard, M., R.~Weiss, and J.~Sinsheimer. 2001. Bayesian selection of
  continuous-time {M}arkov chain evolutionary models. Molecular Biology and
  Evolution 18:1001--1013.
\bibAnnoteFile{suchard01}

\bibitem[{Suchard et~al.(2018)Suchard, Lemey, Baele, Ayres, Drummond, and
  Rambaut}]{suchard2018bayesian}
Suchard, M.~A., P.~Lemey, G.~Baele, D.~L. Ayres, A.~J. Drummond, and
  A.~Rambaut. 2018. Bayesian phylogenetic and phylodynamic data integration
  using {BEAST} 1.10. Virus Evolution 4:vey016.
\bibAnnoteFile{suchard2018bayesian}

\bibitem[{Wadman(2021)}]{wadman2021united}
Wadman, M. 2021. United states rushes to fill void in viral sequencing. Science
  (New York, NY) 371:657--658.
\bibAnnoteFile{wadman2021united}

\bibitem[{Wickham(2016)}]{ggplot}
Wickham, H. 2016. ggplot2: Elegant Graphics for Data Analysis. Springer-Verlag
  New York.
\bibAnnoteFile{ggplot}

\bibitem[{{World Health Organization}(2015)}]{world2015ebola}
{World Health Organization}. 2015. Who: Ebola situation report 30 december 2015
  .
\bibAnnoteFile{world2015ebola}

\bibitem[{Yang(1994)}]{yang1994maximum}
Yang, Z. 1994. Maximum likelihood phylogenetic estimation from dna sequences
  with variable rates over sites: approximate methods. Journal of Molecular
  evolution 39:306--314.
\bibAnnoteFile{yang1994maximum}

\bibitem[{Yang and Rannala(1997)}]{yang1997bayesian}
Yang, Z. and B.~Rannala. 1997. Bayesian phylogenetic inference using dna
  sequences: a markov chain monte carlo method. Molecular biology and evolution
  14:717--724.
\bibAnnoteFile{yang1997bayesian}

\bibitem[{Yuan et~al.(2021)Yuan, Schoenberg, and Bertozzi}]{Yuan2021FastEO}
Yuan, B., F.~Schoenberg, and A.~Bertozzi. 2021. Fast estimation of multivariate
  spatiotemporal hawkes processes and network reconstruction. Annals of the
  Institute of Statistical Mathematics Pages~1--26.
\bibAnnoteFile{Yuan2021FastEO}

\bibitem[{Zhang et~al.(2020)Zhang, Lipani, Kirnap, and Yilmaz}]{zhang2020self}
Zhang, Q., A.~Lipani, O.~Kirnap, and E.~Yilmaz. 2020. Self-attentive hawkes
  process. Pages~11183--11193 \emph{in} International Conference on Machine
  Learning PMLR.
\bibAnnoteFile{zhang2020self}

\bibitem[{Zhuang et~al.(2004)Zhuang, Ogata, and
  Vere-Jones}]{zhuang2004analyzing}
Zhuang, J., Y.~Ogata, and D.~Vere-Jones. 2004. Analyzing earthquake clustering
  features by using stochastic reconstruction. Journal of Geophysical Research:
  Solid Earth 109.
\bibAnnoteFile{zhuang2004analyzing}

\bibitem[{Zuo et~al.(2020)Zuo, Jiang, Li, Zhao, and Zha}]{zuo2020transformer}
Zuo, S., H.~Jiang, Z.~Li, T.~Zhao, and H.~Zha. 2020. Transformer hawkes
  process. Pages~11692--11702 \emph{in} International Conference on Machine
  Learning PMLR.
\bibAnnoteFile{zuo2020transformer}

\end{thebibliography}

\end{document}